\newtheorem{theorem}{Theorem}
\newtheorem{lemma}{Lemma}
\newtheorem{corollary}{Corollary}
\newtheorem{definition}{Definition}
\newtheorem{remark}{Remark}
\title{\LARGE \bf Quantum Linear Systems Theory
}
\author{Ian R.~Petersen%
\thanks{Research supported by the
Australian Research Council (ARC). A preliminary version of this paper appeared in the 2010 MTNS Conference.
}%
\thanks{Ian R. Petersen is with the School of Engineering and Information Technology, 
        University of New South Wales at the Australian Defence Force Academy, Canberra ACT 2600, Australia.
         {\tt\small i.r.petersen@gmail.com} } }%
\begin{document}
\maketitle
\thispagestyle{empty}
\pagestyle{empty}

\begin{abstract}
This paper surveys  some recent results on the
theory of quantum linear systems and presents them within a unified framework. Quantum linear systems are a class
of systems whose dynamics, which are described by the laws of quantum
mechanics, take the specific form of a set of linear quantum stochastic
differential equations (QSDEs). Such systems commonly arise in the
area of quantum optics and related disciplines. Systems whose dynamics
can be described or approximated by linear QSDEs include
interconnections of optical cavities, beam-spitters, phase-shifters,
optical parametric amplifiers, optical squeezers, and cavity quantum
electrodynamic systems.  With advances in quantum technology, the
feedback control of such quantum systems is generating new challenges
in the field of control theory. Potential applications of such quantum
feedback control systems include quantum computing, quantum error
correction, quantum communications, gravity wave detection, metrology,
atom lasers, and superconducting quantum circuits.

A recently emerging approach to the feedback control of quantum linear
systems involves the use of a controller which itself is a quantum linear
system. This approach to quantum feedback control, referred to as
coherent quantum feedback control, has the advantage that it does not
destroy quantum information, is fast, and has the potential for
efficient implementation. This paper discusses recent
results concerning the synthesis of H-infinity optimal
controllers for linear quantum systems  in the coherent control
case. An important issue which 
arises both in the modelling of linear quantum systems and in the
synthesis of linear coherent quantum controllers is the issue of
physical realizability. This issue relates to the property of whether
a given set of QSDEs corresponds to a physical quantum system
satisfying the laws of quantum mechanics. The paper will cover
recent results relating the question of physical realizability to
notions occuring in linear systems theory such as lossless bounded
real systems and dual J-J unitary systems. 
\end{abstract}

%%%%%%%%%%%%%%%%%%%%%%%%%%%%%%%%%%%%%%%%%%%%%%%%%%%%%%%%%%%%%%%%%%%%%%%%%%%%%%%%
\section{Introduction} \label{sec:intro}

Developments in quantum  technology and quantum information 
provide an important motivation for research in the area of quantum
feedback control systems; e.g., see
\cite{BEL83,WIS94,DJ99,AASDM02,YK03a,YK03b,DP3}. In particular, 
in recent years, there has been considerable interest in the feedback
control and modeling of
linear quantum systems; e.g., see
\cite{DJ99,YK03a,YK03a,EB05,YAM06,BE08,JNP1,NJP1,GGY08,MaP1a,MaP3,MaP4,MAB08,YNJP1,PET08Aa,MaP2a,ShP5a,NJD09,GJ09,RD09,GJN10,WM10}.
Such linear quantum  
systems commonly arise in the area of quantum optics; e.g., see
\cite{WM94,GZ00,BR04}. Feedback control of quantum optical systems has
applications in areas such as quantum communications, quantum 
teleportation, and gravity wave detection. In
particular, \emph{linear quantum optics} is one of the possible
platforms being investigated for future communication systems
(see \cite{JPF02,KWD03}) and quantum computers (see \cite{KLM01,KBSM09} and
\cite{NC00}). Feedback control of quantum systems aims to achieve
closed loop properties such as stability \cite{SoP1a,SoP2a},
robustness \cite{JNP1,SPJ1a}, entanglement \cite{YNJP1,HP4a,NPJ1}. 

Quantum linear system models have been used in the physics
and mathematical physics literature since the 1980's; e.g., see \cite{HP84,GC85a,PAR92,GZ00,WM10}. An important class of linear quantum stochastic models describe the Heisenberg evolution of the (canonical) position and momentum, or annihilation and creation operators of several independent open quantum harmonic oscillators that are coupled to external coherent bosonic fields, such as coherent laser beams; e.g.,  see \cite{WM94}, \cite{WM10}, \cite{GZ00}, \cite{EB05}, \cite{BE08}, \cite{WD05,JNP1,NJP1,NJD09,YAM06,MAB08,YNJP1,GGY08,SSM08,GJN10}).
These linear stochastic models describe quantum optical devices such as optical cavities \cite{BR04}, \cite{WM94}, linear quantum amplifiers \cite{GZ00}, and finite bandwidth squeezers \cite{GZ00}. Following  \cite{JNP1,NJP1,NJD09}, we will refer to this class of models as {\em linear quantum stochastic systems}. In particular, we consider  linear quantum stochastic differential equations
driven by quantum Wiener processes; see \cite{GZ00}. Further details on
quantum stochastic differential equations and quantum  Wiener
processes can be found in  \cite{HP84,PAR92,BHJ07}.

This paper will survey some of the available results on the feedback
control of linear quantum systems and related problems. An important class of quantum feedback control systems involves the use of measurement devices to obtain classical output signals from the quantum system and no quantum measurements is involved. These classical signals are fed into a classical controller which may be implemented via analog or digital electronics and then the resulting control signal act on the quantum system via an actuator. However,  some recent papers on the feedback control of linear quantum systems 
have considered the case in which the feedback controller  itself is
also a
quantum system. Such feedback control is often
referred to as coherent quantum control; e.g., see
\cite{WM94a,SL00,YK03a,YK03b,JNP1,NJP1,MaP1a,MaP3,MaP4,MAB08,GW09}. Due to the
limitations 
imposed by quantum mechanics on the use of quantum measurement, the use of
coherent quantum feedback control may lead to improved control system
performance. In addition, in many applications, coherent quantum
feedback controllers may be preferable to classical feedback controllers
 due to considerations of speed and ease of implementation.  

One motivation for considering such coherent
quantum control problems is that coherent controllers have the
potential to achieve improved performance since quantum measurements
inherently involve the destruction of quantum information; e.g., see
\cite{NC00}. Also, technology is emerging which will enable the
implementation of complex coherent quantum controllers (e.g., see
\cite{PCRYO06}) and the coherent $H^\infty$ controllers proposed in
\cite{JNP1} have already been implemented experimentally as described
in \cite{MAB08}. Furthermore, coherent controllers implemented using
quantum optics have the potential to operate at much higher speeds
than classical controllers implemented in analog or digital
electronics.

In general, quantum linear stochastic systems represented by
linear Quantum Stochastic Differential Equations (QSDEs) with
arbitrary constant coefficients need not correspond to physically
meaningful systems. In contrast, because classical linear
stochastic systems can be implemented at least approximately,
using analog or digital electronics, we regard them as always
being realizable. Physical quantum systems must satisfy some
additional constraints that restrict the allowable values for the
system matrices defining the QSDEs. In particular, the laws of
quantum mechanics dictate that closed quantum systems evolve
\emph{unitarily}, implying that (in the Heisenberg picture)
certain canonical observables satisfy the so-called canonical
commutation relations (CCR) at all times. Therefore, to
characterize physically meaningful systems,  \cite{JNP1} has
introduced a formal notion of physically realizable quantum linear
stochastic systems and derives a pair of necessary and sufficient
characterizations for such systems in terms of constraints on
their system matrices.

In the paper \cite{ShP5a},  the physical realizability results of
\cite{MaP1a,MaP3} are extended to the most general class of complex linear
QSDEs. It is  shown that this
class of linear quantum systems corresponds to the class of real linear
quantum systems considered in \cite{JNP1} via the use of a suitable state
transformation. 

The remainder of this paper proceeds as follows. In Section
\ref{sec:systems}, we introduce the class of linear quantum stochastic
systems under consideration and consider a number of different
representations of these systems. We also introduce a useful special class of
linear quantum systems which was considered in \cite{MaP1a,MaP3,MaP4}. In
Section \ref{sec:physical_realizability}, we consider the issue of
physical realizability for the class of linear quantum systems under
consideration. In Section \ref{sec:Hinf}, we will consider
the problem of coherent $H^\infty$ quantum controller synthesis. In
Section \ref{sec:conclusions}, we present some conclusions.

\section{Linear Quantum System Models}
\label{sec:systems}
In this section, we formulate the class of linear quantum system
models under consideration. These linear quantum system models take
the form of quantum stochastic differential equations which are
derived from the quantum harmonic oscillator. 
\subsection{Quantum Harmonic Oscillators}
\label{subsec:harmonic_oscillator}
We begin by considering a collection of $n$ independent quantum
harmonic oscillators which are
defined on a Hilbert space $\mathcal{H} =
L^2(\mathbb{R}^n,\mathbb{C})$; e.g., see
\cite{MEY95,PAR92,GJN10}. Elements of the Hilbert space $\mathcal{H}$,
$\psi(x)$ are the standard complex valued wave functions arising in
quantum mechanics where $x$ is a spatial variable.  Corresponding to
this collection of  
harmonic oscillators is a vector of  {\em annihilation operators}
\begin{equation}
\label{anninil}
a = \left[\begin{array}{c}a_1\\a_2\\ \vdots \\ a_n \end{array}\right].
\end{equation}
Each annihilation operator $a_i$  is an
unbounded linear operator defined on a suitable domain in $\mathcal{H}$ by 
\[
(a_i\psi)(x) = \frac{1}{\sqrt{2}}x_i\psi(x) +
\frac{1}{\sqrt{2}}\frac{\partial \psi(x)}{\partial x_i}
\]
where $\psi \in \mathcal{H}$ is contained in the  domain of the
operator $a_i$. The adjoint of the operator $a_i$ is denoted $a_i^*$
and is referred to as a creation operator. The operators $a_i$ and
$a_i^*$ are such that the following cannonical commutation relations are satisfied 
\begin{equation}
\label{CCR1}
[a_i,a_j^*] = a_ia_j^*-a_j^*a_i = \delta_{ij}
\end{equation}
where $\delta_{ij}$ denotes the kronecker delta multiplied by the
identity operator on the Hilbert space $\mathcal{H}$. We also have the
commutation relations
\begin{equation}
\label{CCR1A}
[a_i,a_j] = 0, ~[a_i^*,a_j^*]=0.
\end{equation}
For a general vector of operators
\[
g =  
 \left[\begin{array}{c}g_1\\g_2\\ \vdots \\ g_n \end{array}\right],
\]
on $\mathcal{H}$, we use the notation 
\[
g^\# =  
 \left[\begin{array}{c}g^*_1\\g^*_2\\ \vdots \\ g^*_n \end{array}\right],
\]
to denote the corresponding vector of adjoint operators. Also, $g^T$
denotes the corresponding row vector of operators $g^T =
\left[\begin{array}{cccc}g_1 & g_2 & \ldots &g_n \end{array}\right]$,
and $g^\dagger = \left(g^\#\right)^T$. Using this notation, the
canonical commutation relations (\ref{CCR1}), (\ref{CCR1A}) can be written as 
\begin{eqnarray}
\label{CCR2}
\left[\left[\begin{array}{l}
      a\\a^\#\end{array}\right],\left[\begin{array}{l}
      a\\a^\#\end{array}\right]^\dagger\right]
&=&\left[\begin{array}{l} a\\a^\#\end{array}\right]
\left[\begin{array}{l} a\\a^\#\end{array}\right]^\dagger
\nonumber \\
&&- \left(\left[\begin{array}{l} a\\a^\#\end{array}\right]^\#
\left[\begin{array}{l} a\\a^\#\end{array}\right]^T\right)^T\nonumber \\
&=& \left[\begin{array}{cc}I & 0\\
0 & -I\end{array}\right] .
\end{eqnarray}

A state on our system of quantum harmonic oscillators is defined by a
density operator $\rho$ which is a self-adjoint positive-semidefinite
operator on $\mathcal{H}$ with $tr(\rho) =1$; e.g., see \cite{NC00}.
Corresponding to a state $\rho$ and an operator $g$ on $\mathcal{H}$ is the quantum expectation
\[
\langle g \rangle = tr(\rho g).
\]

A state on the system is said to be {\em Gaussian} with
positive-semidefinite {\em covariance
matrix} $Q \in \mathbb{C}^{2n\times 2n}$ and {\em mean vector} $\alpha \in
\mathbb{C}^{n}$ if given any vector $u \in \mathbb{C}^n$, 
\begin{eqnarray*}
\lefteqn{\left\langle \exp\left(i
\left[\begin{array}{cc}u^\dagger & u^T\end{array}\right]
\left[\begin{array}{c}a \\ a^\#\end{array}\right]
\right)\right\rangle} \\
&=& \exp\left(\begin{array}{c}-\frac{1}{2}\left[\begin{array}{cc}u^\dagger &
      u^T\end{array}\right]Q
\left[\begin{array}{c}u \\ u^\#\end{array}\right] \\
+ i \left[\begin{array}{cc}u^\dagger & u^T\end{array}\right] 
\left[\begin{array}{c}\alpha  \\ \alpha^\#\end{array}\right];
\end{array}\right);
\end{eqnarray*}
e.g., see \cite{GJN10,MEY95}. Here, $u^\#$ denotes the complex conjugate of the complex vector
$u$, $u^T$ denotes the transpose of the complex vector $u$, and
$u^\dagger$ denotes the complex conjugate transpose of the complex
vector $u$. 

Note that in the zero mean case, $\alpha = 0$, the covariance matrix $Q$ satisfies
\[
Q = \left\langle \left[\begin{array}{c}a \\ a^\#\end{array}\right] \left[\begin{array}{c}a \\ a^\#\end{array}\right]^\dagger\right\rangle.
\]
In the special case in which the covariance matrix $Q$ is of the form
\[
Q = \left[\begin{array}{cc}I & 0 \\
0 & 0 \end{array}\right]
\]
and the mean $\alpha =0$, the system is said to be in the {\em vacuum
  state}. In the sequel, it will be assumed that the state on the
system of harmonic oscillators is a Gaussian vacuum state. The state on the
system of harmonic oscillators plays a similar role to the probability
distribution of the initial conditions of a classical stochastic
system. 

The quantum harmonic oscillators described above are assumed to be
coupled to $m$ external independent quantum fields modelled by bosonic
annihilation field operators $\mathcal{A}_1(t),
\mathcal{A}_2(t),\ldots,\mathcal{A}_m(t)$ which are defined on
separate Fock spaces $\mathcal{F}_i$ defined  over
$L^2(\mathbb{R})$ for each field operator
\cite{HP84,PAR92,BHJ07,NPJ1}. 
% Elements of the Fock space
% $\mathcal{F}_i$ are wavefunctions $\Psi(t)$ (where $t$ is a time
% variable) corresponding to the field interacting with the system of
% harmonic oscillators.  
For each annihilation field operator
$\mathcal{A}_j(t)$, there is a corresponding creation field operator
$\mathcal{A}_j^*(t)$, which is defined on the same Fock space and is
the operator adjoint of $\mathcal{A}_j(t)$. 
The field operators are adapted quantum stochastic processes with
forward differentials
\[
d\mathcal{A}_j(t)=\mathcal{A}_j(t+dt)-\mathcal{A}_j(t)
\] and
\[
d\mathcal{A}_j^*(t)=\mathcal{A}_{j}^*(t+dt)-\mathcal{A}_j^*(t)
\]
 that
have the quantum It\^{o} products \cite{HP84,PAR92,BHJ07,NPJ1}:
\begin{eqnarray*}
d\mathcal{A}_{j}(t)d\mathcal{A}_{k}(t)^*&=&\delta_{jk}dt; \\
d\mathcal{A}_{j}^*(t)d\mathcal{A}_{k}(t)&=& 0; \\
d\mathcal{A}_{j}(t)d\mathcal{A}_{k}(t)&=&0; \\
d\mathcal{A}_{j}^*(t)
d\mathcal{A}_{k}^*(t)
&=&0. 
\end{eqnarray*}
The field annilation operators are also collected into a vector of
operators defined as follows:
\[
 \mathcal{A}(t)=\left[\begin{array}{c}
\mathcal{A}_1(t)\\ \mathcal{A}_2(t)\\ \vdots \\ \mathcal{A}_m(t)
\end{array}\right].
\]

For each $i$, the corresponding system state on the Fock space
$\mathcal{F}_i$ is assumed to be a Gaussian vacuum state which means
that given any complex valued function $u_i(t) \in L^2(\mathbb{R},\mathbb{C})$, then
\begin{eqnarray*}
\lefteqn{\left\langle \exp\left(i
\int_0^\infty u_i(t)^* d\mathcal{A}_i(t)
+ i\int_0^\infty u_i(t)d\mathcal{A}_i(t)^*
\right)\right\rangle} \\
&=& \exp\left(-\frac{1}{2}\int_0^\infty |u(t)|^2 dt \right);\hspace{2cm}
\end{eqnarray*}
e.g., see \cite{HP84,PAR92,BHJ07,GJN10}.

In order to describe the joint evolution of the quantum harmonic
oscillators and quantum fields, we first specify the {\em Hamiltonian
operator} for the quantum system which is a Hermitian operator on
$\mathcal{H}$ of the form
\[
H = \frac{1}{2}\left[\begin{array}{cc}a^\dagger &
      a^T\end{array}\right]M
\left[\begin{array}{c}a \\ a^\#\end{array}\right]
\]
where $M \in \mathbb{C}^{2n\times 2n}$ is a Hermitian matrix of the
form
\[
M= \left[\begin{array}{cc}M_1 & M_2\\
M_2^\# &     M_1^\#\end{array}\right]
\]
and $M_1 = M_1^\dagger$, $M_2 = M_2^T$. Here,
$M^\dagger$ denotes the complex conjugate transpose of the
complex matrix $M$, $M^T$ denotes the  transpose of the
complex matrix $M$, and $M^\#$ denotes the complex conjugate  of the
complex matrix $M$. Also, we specify the {\em coupling operator} for
the quantum system to be an operator of the form
\[
L = \left[\begin{array}{cc}N_1 & N_2 \end{array}\right]
\left[\begin{array}{c}a \\ a^\#\end{array}\right]
\]
where $N_1 \in \mathbb{C}^{m\times n}$ and $N_2 \in
\mathbb{C}^{m\times n}$. Also, we write
\[
\left[\begin{array}{c}L \\ L^\#\end{array}\right] = N
\left[\begin{array}{c}a \\ a^\#\end{array}\right] =
\left[\begin{array}{cc}N_1 & N_2\\
N_2^\# &     N_1^\#\end{array}\right]
\left[\begin{array}{c}a \\ a^\#\end{array}\right].
\]
In addition, we define a {\em scattering
  matrix} which is a unitary matrix $S \in \mathbb{C}^{n\times
  n}$. These quantities then define the joint evolution of the quantum
harmonic oscillators and the quantum fields according to a unitary
adapted process $U(t)$ (which is an operator valued function of time) satisfying the Hudson-Parthasarathy QSDE \cite{HP84,PAR92,BHJ07,GJ09}:
\begin{eqnarray*}
dU(t) &=& \left(\mbox{tr}\left[(S-I)^T d\Lambda(t)\right] +  d\mathcal{A}(t)^{\dagger} L -
L^{\dagger}Sd\mathcal{A}(t)\right. \\
&&-\left.(i H + \frac{1}{2}L^{\dagger}L )dt\right)U(t);~U(0) = I,
\end{eqnarray*}
where
$\Lambda(t)=[\Lambda_{jk}(t)]_{j,k=1,\ldots,m}$. Here, the processes
$\Lambda_{jk}(t)$ for $j,k=1,\ldots,m$ are adapted quantum stochastic
processes referred to as {\em gauge processes}, and the forward
differentials $d\Lambda_{jk}(t)=\Lambda_{jk}(t+dt)-\Lambda_{jk}(t) $
$j,k=1,\ldots,m$ have the quantum It\^{o} products: 
\begin{eqnarray*}
d\Lambda_{jk}(t)
d\Lambda_{j'k'}(t)&=&\delta_{kj'}d\Lambda_{jk'}(t); \\
 d\mathcal{A}_j(t)d\Lambda_{kl}(t)&=&\delta_{jk}d\mathcal{A}_l(t);\\
d\Lambda_{jk} d\mathcal{A}_l(t)^*&=&\delta_{kl}d\mathcal{A}_j^*(t).
\end{eqnarray*}
Then, using the Heisenberg picture of quantum mechanics, the harmonic oscillator
operators $a_i(t)$ evolve with time unitarily according to 
$$a_i(t) =
U(t)^*a_iU(t)$$
 for $i=1,2,\ldots,n$. Also, the linear quantum system
output fields are given by 
$$\mathcal{A}^{out}_i(t) =
U(t)^*\mathcal{A}_i(t)U(t)$$
 for $i=1,2,\ldots,m$. 

We now use the fact that for any adapted processes $X(t)$ and $Y(t)$ satisfying a quantum It\^{o}
stochastic differential equation, we have the {\em quantum It\^{o} rule}
$$dX(t)Y(t)=X(t)dY(t)+dX(t) Y(t) + dX(t) dY(t);$$
 e.g., see
\cite{PAR92}.  Using the quantum It\^{o} rule and the quantum It\^{o} products
given above, as well as exploiting the canonical commutation relations
between the operators in $a$, the following QSDEs decribing the linear
quantum system can be obtained (e.g., see \cite{GJN10}):
\begin{eqnarray}
\label{qsde1}
da(t)&=& dU(t)^* a U(t)\nonumber \\
 &=& \left[\begin{array}{cc}F_1 &
    F_2\end{array}\right]
\left[\begin{array}{l} a(t)\\a(t)^\#\end{array}\right]dt \nonumber \\
&&+ \left[\begin{array}{cc}G_1 &  G_2\end{array}\right]
 \left[\begin{array}{l} d\mathcal{A}(t)
\\ d\mathcal{A}(t)^{\#} \end{array}\right];  \nonumber \\
a(0)&=&a; \nonumber \\
d\mathcal{A}^{out}(t)&=& dU(t)^* \mathcal{A}(t)U(t)\nonumber \\
&=& 
\left[\begin{array}{cc} H_1 & H_2 \end{array}\right]
\left[\begin{array}{l} a(t)\\a(t)^\#\end{array}\right]dt\nonumber \\
&&+  \left[\begin{array}{cc}K_1 & K_2 \end{array}\right]
 \left[\begin{array}{l} d\mathcal{A}(t)
\\ d\mathcal{A}(t)^{\#} \end{array}\right],
\end{eqnarray}
where
\begin{eqnarray}
\label{canonicalFGHK}
F_1 &=& -i M_1 -\frac{1}{2}
\left(N_1^\dagger N_1 - N_2^T N_2^\#\right);\nonumber  \\
F_2 &=& -i M_2 -\frac{1}{2}
\left(N_1^\dagger N_2 - N_2^T N_1^\#\right);\nonumber \\ 
G_1 &=& -N_1^\dagger S;\nonumber \\
G_2 &=& N_2^TS^\#;\nonumber \\
H_1 &=& N_1;\nonumber \\
H_2 &=& N_2;\nonumber \\
K_1 &=& S;\nonumber \\
K_2 &=& 0.
\end{eqnarray}
From this, we can write
\begin{eqnarray}
\label{qsde1a}
\left[\begin{array}{l} da(t)\\da(t)^\#\end{array}\right] &=& 
F\left[\begin{array}{l} a(t)\\a(t)^\#\end{array}\right]dt 
+ G \left[\begin{array}{l} d\mathcal{A}(t)
\\ d\mathcal{A}(t)^{\#} \end{array}\right];  \nonumber \\
\left[\begin{array}{l} d\mathcal{A}^{out}(t)
\\ d\mathcal{A}^{out}(t)^{\#} \end{array}\right] &=& 
H \left[\begin{array}{l} a(t)\\a(t)^\#\end{array}\right]dt 
+ K \left[\begin{array}{l} d\mathcal{A}(t)
\\ d\mathcal{A}(t)^{\#} \end{array}\right],\nonumber \\
\end{eqnarray}
where
\begin{eqnarray}
\label{FGHKform}
F &=& \left[\begin{array}{cc}F_1  & F_2\\
F_2^\# & F_1^\#\end{array}\right]; ~~
G = \left[\begin{array}{cc}G_1  & G_2\\
G_2^\# & G_1^\#\end{array}\right]; \nonumber \\
H &=& \left[\begin{array}{cc}H_1  & H_2\\
H_2^\# & H_1^\#\end{array}\right]; ~~
K = \left[\begin{array}{cc}K_1  & K_2\\
K_2^\# & K_1^\#\end{array}\right]. ~~
\end{eqnarray}
Also, the equations (\ref{canonicalFGHK}) can be re-written as
\begin{eqnarray}
\label{canonicalFGHK1}
 F &=& -i J M -\frac{1}{2}  J N^\dagger J N; \nonumber \\
 G &=& -  J N^\dagger 
\left[\begin{array}{cc}S & 0 \\
0 & -S^\#\end{array}\right]; \nonumber \\
 H &=&  N; \nonumber \\
 K &=& \left[\begin{array}{cc}S & 0 \\
0 & S^\#\end{array}\right];
\end{eqnarray}
where 
\[
J = \left[\begin{array}{cc}I & 0 \\
0 & -I\end{array}\right].
\]

Note that matrices of the form (\ref{FGHKform}) occur commonly in the
theory of linear quantum systems. It is straightforward to establish
the following lemma which characterizes matrices of this form.

\begin{lemma}
\label{L1}
A matrix $R = \left[\begin{array}{cc}R_1  & R_2\\
R_3 & R_4\end{array}\right]$ satisfies 
\[
\left[\begin{array}{cc}R_1  & R_2\\
R_3 & R_4\end{array}\right] = \left[\begin{array}{cc}R_1  & R_2\\
R_2^\# & R_1^\#\end{array}\right]
\]
if and only if
\[
R\Sigma = \Sigma R^\#
\]
where
\[
\Sigma = \left[\begin{array}{cc}0 & I \\
I & 0\end{array}\right].
\] 
\end{lemma}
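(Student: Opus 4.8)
The plan is to prove the equivalence by a direct block computation, since the structural identity imposed on $R$ is simply the pair of block equations $R_3 = R_2^\#$ and $R_4 = R_1^\#$. First I would record both sides of the claimed relation $R\Sigma = \Sigma R^\#$ at the level of $2\times 2$ blocks. Multiplying on the right by $\Sigma = \left[\begin{array}{cc}0 & I \\ I & 0\end{array}\right]$ simply interchanges the two block-columns of $R$, giving
\[
R\Sigma = \left[\begin{array}{cc}R_2 & R_1 \\ R_4 & R_3\end{array}\right],
\]
while multiplying $R^\# = \left[\begin{array}{cc}R_1^\# & R_2^\# \\ R_3^\# & R_4^\#\end{array}\right]$ on the left by $\Sigma$ interchanges its two block-rows, giving
\[
\Sigma R^\# = \left[\begin{array}{cc}R_3^\# & R_4^\# \\ R_1^\# & R_2^\#\end{array}\right].
\]

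Next I would equate these two expressions block by block. This yields the four block equations $R_2 = R_3^\#$, $R_1 = R_4^\#$, $R_4 = R_1^\#$, and $R_3 = R_2^\#$. The key observation is that entrywise complex conjugation $(\cdot)^\#$ is an involution, so $R_2 = R_3^\#$ holds if and only if its conjugate $R_2^\# = R_3$ holds, and likewise $R_1 = R_4^\#$ is equivalent to $R_4 = R_1^\#$. Hence the four block equations are redundant in pairs and collapse to the two independent conditions $R_3 = R_2^\#$ and $R_4 = R_1^\#$, which are precisely the constraints defining the structured form in the statement. Reading this chain of equivalences in both directions establishes the lemma.

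I do not expect any genuine obstacle here: the argument is a routine block manipulation. The only points requiring a little care are confirming that conjugation commutes with the block decomposition, so that $R^\#$ has blocks $R_i^\#$ in the obvious positions, and correctly exploiting the involutivity of $(\cdot)^\#$ to see that the four equations produced by the block comparison carry no more information than the two structural constraints. Once these are noted, the equivalence follows immediately.
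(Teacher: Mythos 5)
Your proof is correct: the block computation of $R\Sigma$ (column swap) and $\Sigma R^\#$ (row swap), together with the involutivity of entrywise conjugation, is exactly the routine verification the paper has in mind when it states the lemma as ``straightforward to establish'' and omits the proof. Nothing is missing, and the argument establishes the equivalence in both directions.
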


We now consider the case when the initial condition in the QSDE
(\ref{qsde1}) is no longer the vector of annihilation operators
(\ref{anninil}) but rather a vector of linear combinations of
annihilation operators and creation operators defined by
\[
\tilde a = T_1 a + T_2 a^\#
\]
where 
\[
T = \left[\begin{array}{cc} T_1 & T_2\\
T_2^\# & T_1^\#\end{array}\right]\in \mathbb{C}^{2n\times 2n}
\] is non-singular. Then, it follows
from (\ref{CCR2}) that
\begin{eqnarray*}
\lefteqn{\left[\left[\begin{array}{l}
      \tilde a\\\tilde a^\#\end{array}\right],\left[\begin{array}{l}
      \tilde a\\\tilde a^\#\end{array}\right]^\dagger\right]}\nonumber \\
 &=& \left[\begin{array}{l} \tilde a\\\tilde a^\#\end{array}\right]
\left[\begin{array}{l} \tilde a\\\tilde a^\#\end{array}\right]^\dagger
%\nonumber \\&&
- \left(\left[\begin{array}{l} \tilde a\\\tilde a^\#\end{array}\right]^\#
\left[\begin{array}{l} \tilde a\\\tilde a^\#\end{array}\right]^T\right)^T
\nonumber \\
&=& T\left[\begin{array}{l} \tilde a\\\tilde a^\#\end{array}\right]
\left[\begin{array}{l} \tilde a\\\tilde a^\#\end{array}\right]^\dagger
T^\dagger \nonumber \\
&&- \left(T^\# \left(\left[\begin{array}{l}  a\\ a^\#\end{array}\right]^\#
\left[\begin{array}{l}  a\\ a^\#\end{array}\right]^T\right)^TT^T\right)^T\\
&=& T\left(\begin{array}{c}\left[\begin{array}{l} a\\a^\#\end{array}\right]
\left[\begin{array}{l} a\\a^\#\end{array}\right]^\dagger \\
- \left(\left[\begin{array}{l} a\\a^\#\end{array}\right]^\#
\left[\begin{array}{l} a\\a^\#\end{array}\right]^T\right)^T\end{array}\right)T^\dagger \\
&=& \Theta
\end{eqnarray*}
where 
\begin{eqnarray}
\label{Theta}
\Theta &=&  TJT^\dagger \nonumber \\
&=&\left[\begin{array}{cc}T_1T_1^\dagger - T_2T_2^\dagger & T_1T_2^T -T_2T_1^T\\
T_2^\#T_1^\dagger -T_1^\# T_2^\dagger& T_2^\#T_2^T-T_1^\#T_1^T\end{array}\right].
\end{eqnarray}
The relationship 
\begin{equation}
\label{CCR3}
\left[\left[\begin{array}{l}
      \tilde a\\\tilde a^\#\end{array}\right],\left[\begin{array}{l}
      \tilde a\\\tilde a^\#\end{array}\right]^\dagger\right] = \Theta
\end{equation}
is referred to as a {\em generalized commutation relation}
\cite{MaP1a,MaP3,MaP4}. Also, the covariance matrix corresponding to
$\left[\begin{array}{l}
      \tilde a\\\tilde a^\#\end{array}\right]$ is given by
\begin{eqnarray*}
\tilde Q  &=& \left\langle \left[\begin{array}{c}\tilde a \\
      \tilde a^\#\end{array}\right] \left[\begin{array}{c}\tilde a \\
      \tilde a^\#\end{array}\right]^\dagger\right\rangle \\
 &=& T\left[\begin{array}{cc}I & 0\\
0 & 0\end{array}\right]T^\dagger \nonumber \\
&=&\left[\begin{array}{cc}T_1T_1^\dagger & T_1T_2^T \\
T_2^\#T_1^\dagger & T_2^\#T_2^T\end{array}\right].
\end{eqnarray*}

In terms of the variables $\tilde a(t) =
U(t)^*\tilde aU(t)$, the QSDEs, (\ref{qsde1a}) can be rewritten as
% \begin{eqnarray}
% \label{qsde2}
% d\tilde a(t)&=& \left[\begin{array}{cc}\tilde F_1 &
%     \tilde F_2\end{array}\right]
% \left[\begin{array}{l} \tilde a(t)\\
% \tilde a(t)^\#\end{array}\right]dt \nonumber \\
% &&+ \left[\begin{array}{cc}\tilde G_1 &  \tilde G_2\end{array}\right]
%  \left[\begin{array}{l} d\mathcal{A}(t)
% \\ d\mathcal{A}(t)^{\#} \end{array}\right];  \nonumber \\
% \tilde a(0)&=&\tilde a; \nonumber \\
% d\mathcal{A}^{out}(t)
% &=& 
% \left[\begin{array}{cc} \tilde H_1 & \tilde H_2 \end{array}\right]
% \left[\begin{array}{l} \tilde a(t)\\ 
% \tilde a(t)^\#\end{array}\right]dt\nonumber \\
% &&+  \left[\begin{array}{cc}\tilde K_1 & \tilde K_2 \end{array}\right]
%  \left[\begin{array}{l} d\mathcal{A}(t)
% \\ d\mathcal{A}(t)^{\#} \end{array}\right],
% \end{eqnarray}
% where
% \begin{eqnarray}
% \label{tildeFGHK}
% \tilde F_1 &=& T F_1 T^{-1}; \nonumber \\
% \tilde F_2 &=& T F_2 \left(T^\#\right)^{-1};\nonumber \\ 
% \tilde G_1 &=& T G_1;\nonumber \\
% \tilde G_2 &=& T G_2;\nonumber \\
% \tilde H_1 &=& H_1T^{-1};\nonumber \\
% \tilde H_2 &=& H_2\left(T^\#\right)^{-1};\nonumber \\
% \tilde K_1 &=& K_1;\nonumber \\
% \tilde K_2 &=& K_2.
% \end{eqnarray}
% Alternatively, we can write
\begin{eqnarray}
\label{qsde3}
\left[\begin{array}{l} d\tilde a(t)\\d\tilde a(t)^\#\end{array}\right] &=& 
\tilde F\left[\begin{array}{l} \tilde a(t)\\\tilde a(t)^\#\end{array}\right]dt 
+ \tilde G \left[\begin{array}{l} d\mathcal{A}(t)
\\ d\mathcal{A}(t)^{\#} \end{array}\right];  \nonumber \\
\left[\begin{array}{l} d\mathcal{A}^{out}(t)
\\ d\mathcal{A}^{out}(t)^{\#} \end{array}\right] &=& 
\tilde H \left[\begin{array}{l} \tilde a(t)\\\tilde a(t)^\#\end{array}\right]dt 
+ \tilde K \left[\begin{array}{l} d\mathcal{A}(t)
\\ d\mathcal{A}(t)^{\#} \end{array}\right],\nonumber \\
\end{eqnarray}
where
\begin{eqnarray}
\label{tildeFGHK1}
\tilde F &=& \left[\begin{array}{cc}\tilde F_1  & \tilde F_2\\
\tilde F_2^\# & \tilde F_1^\#\end{array}\right] = TFT^{-1}; \nonumber \\
\tilde G &=& \left[\begin{array}{cc}\tilde G_1  & \tilde G_2\\
\tilde G_2^\# & \tilde G_1^\#\end{array}\right] = TG; \nonumber\\
\tilde H &=& \left[\begin{array}{cc}\tilde H_1  & \tilde H_2\\
\tilde H_2^\# & \tilde H_1^\#\end{array}\right]=HT^{-1}; \nonumber \\
\tilde K &=& \left[\begin{array}{cc}\tilde K_1  & \tilde K_2\\
\tilde K_2^\# & \tilde K_1^\#\end{array}\right] = K. 
\end{eqnarray}

Now, we can re-write the operators $H$ and $L$ defining the above
collection of quantum harmonic oscillators in terms of the variables
$\tilde a$ as 
\[
H = \frac{1}{2}\left[\begin{array}{cc}\tilde a^\dagger &
      \tilde a^T\end{array}\right]\tilde M
\left[\begin{array}{c}\tilde a \\ \tilde a^\#\end{array}\right], ~~
\left[\begin{array}{c}\tilde L \\ \tilde L^\#\end{array}\right] = \tilde N
\left[\begin{array}{c}\tilde a \\ \tilde a^\#\end{array}\right]
\]
where 
\begin{equation}
\label{tildeHL}
\tilde M = \left(T^\dagger\right)^{-1} M T^{-1}, ~~
\tilde N = NT^{-1}.
\end{equation}
Here,
\begin{equation}
\label{tildeMN}
\tilde M= \left[\begin{array}{cc}\tilde M_1 & \tilde M_2\\
\tilde M_2^\# &     \tilde M_1^\#\end{array}\right],~~\tilde N = \left[\begin{array}{cc}\tilde N_1 & \tilde
    N_2 \\
\tilde N_2^\# & \tilde N_1^\# \end{array}\right].
\end{equation}
Furthermore, equations (\ref{canonicalFGHK1}), (\ref{tildeFGHK1}) and  (\ref{tildeHL}) can be combined to 
obtain
\begin{eqnarray}
\label{generalizedFGHK1}
\tilde F &=& -i \Theta  \tilde M -\frac{1}{2} \Theta \tilde N^\dagger J \tilde N; \nonumber \\
\tilde G &=& -\Theta  \tilde N^\dagger 
\left[\begin{array}{cc}S & 0 \\
0 & -S^\#\end{array}\right]; \nonumber \\
\tilde H &=& \tilde N; \nonumber \\
\tilde K &=& \left[\begin{array}{cc}S & 0 \\
0 & S^\#\end{array}\right].
\end{eqnarray}
Note that since $S$ is unitary, it follows that
\begin{eqnarray}
\label{KJKdagger}
\tilde  K J \tilde K^\dagger &=& \left[\begin{array}{cc}S & 0 \\
0 & S^\#\end{array}\right] \left[\begin{array}{cc}I & 0 \\
0 & -I\end{array}\right] \left[\begin{array}{cc}S^\dagger & 0 \\
0 & S^T\end{array}\right] \nonumber \\
&=&  \left[\begin{array}{cc}S & 0 \\
0 & S^\#\end{array}\right] \left[\begin{array}{cc}S^\dagger & 0 \\
0 & -S^T\end{array}\right] \nonumber \\
&=& \left[\begin{array}{cc}SS^\dagger & 0 \\
0 & -S^\#S^T\end{array}\right] = J. 
\end{eqnarray}
Also,
\begin{eqnarray}
\label{Kunitary}
\tilde  K \tilde K^\dagger &=& \left[\begin{array}{cc}S & 0 \\
0 & S^\#\end{array}\right]  \left[\begin{array}{cc}S^\dagger & 0 \\
0 & S^T\end{array}\right] \nonumber \\
&=& \left[\begin{array}{cc}SS^\dagger & 0 \\
0 & S^\#S^T\end{array}\right] = I. 
\end{eqnarray}
Indeed these two properties characterize all matrices non-singular $\tilde K$  satisfying $\tilde K \Sigma = \Sigma \tilde K^\#$ which are of the form given in (\ref{generalizedFGHK1}). Let the nonsingular matrix $K$ be such that $ K \Sigma = \Sigma  K^\#$, $KJK^\dagger =J$, and $KK^\dagger =I$. It follows from Lemma \ref{L1} that we can write
\[
 K = \left[\begin{array}{cc}K_1 & K_2 \\
K_2^\# & K_1^\#\end{array}\right].
\]
Also, $KK^\dagger =I$ implies
\begin{eqnarray*}
 KK^\dagger &=& \left[\begin{array}{cc}K_1 & K_2 \\
K_2^\# & K_1^\#\end{array}\right]\left[\begin{array}{cc}K_1^\dagger & K_2^T \\
K_2^\dagger & K_1^T\end{array}\right] \nonumber \\
&=& \left[\begin{array}{cc}K_1K_1^\dagger + K_2K_2^\dagger & K_1K_2^T+K_2K_1^T\\
K_2^\#K_1^\dagger+K_1^\#K_2^\dagger & K_2^\#K_2^T+ K_1^\#K_1^T
\end{array}\right]
\nonumber \\
&=& \left[\begin{array}{cc}I & 0 \\
0 & I\end{array}\right] 
\end{eqnarray*}
and $KJK^\dagger =J$ implies
\begin{eqnarray*}
 KJK^\dagger &=& \left[\begin{array}{cc}K_1 & K_2 \\
K_2^\# & K_1^\#\end{array}\right]\left[\begin{array}{cc}I & 0 \\
0 & -I\end{array}\right]\left[\begin{array}{cc}K_1^\dagger & K_2^T \\
K_2^\dagger & K_1^T\end{array}\right] \nonumber \\
&=& \left[\begin{array}{cc}K_1 & K_2 \\
K_2^\# & K_1^\#\end{array}\right]\left[\begin{array}{cc}K_1^\dagger & K_2^T \\
-K_2^\dagger & -K_1^T\end{array}\right] \nonumber \\
&=& \left[\begin{array}{cc}K_1K_1^\dagger - K_2K_2^\dagger & K_1K_2^T-K_2K_1^T\\
K_2^\#K_1^\dagger-K_1^\#K_2^\dagger & K_2^\#K_2^T- K_1^\#K_1^T
\end{array}\right]
\nonumber \\
&=& \left[\begin{array}{cc}I & 0 \\
0 & -I\end{array}\right].
\end{eqnarray*}
The $(1,1)$ block of these two equations imply $K_1K_1^\dagger + K_2K_2^\dagger = I$ and $K_1K_1^\dagger - K_2K_2^\dagger=I$. Hence $K_2K_2^\dagger = 0$. Therefore, $K_2 = 0$ and $K_1K_1^\dagger=I$. From this, it follows that the matrix $K$ must be of the form given in (\ref{generalizedFGHK1}). 

The QSDEs (\ref{qsde3}), (\ref{tildeFGHK1}), (\ref{generalizedFGHK1})
define the general class of linear quantum systems considered in this
paper. Such quantum systems can be used to model a large range of
devices and networks of devices arising in the area of quantum optics
including optical cavities, squeezers, optical parametric amplifiers,
cavity QED systems, 
beam splitters, and  phase shifters; e.g., see
\cite{DJ99,YK03a,YK03b,JNP1,MAB08,PET08Aa,NJD09,RD09,WM10,WM94,GZ00,BR04,GW09}.

\subsection{Annihilation operator linear quantum systems}
\label{subsec:annihilator}

An important special case of the linear quantum systems (\ref{qsde3}),
(\ref{tildeFGHK1}), (\ref{generalizedFGHK1}) corresponds to the case in
which the Hamiltonian operator $H$ and coupling operator $L$ depend only of
the vector of annihilation operators $a$ and not on the vector of
creation operators $a^\#$. This class of linear quantum systems is
considered in \cite{MaP1a,MaP2a,MaP3,MaP4,PET08Aa,PET09Aa,MAB08} and
can be used to model ``passive'' quantum optical devices such as
optical cavities, beam splitters, phase shifters and
interferometers. 

This class of linear quantum systems corresponds to
the case in which $\tilde M_2 = 0$, $\tilde N_2 = 0$, and $T_2 =
0$. In this case, 
the linear quantum system can be modelled by the QSDEs
\begin{eqnarray}
\label{qsde4}
d \tilde a(t) &=& \tilde F \tilde a(t) dt + \tilde G d\mathcal{A}(t)
\nonumber \\
d\mathcal{A}^{out}(t)  &=& \tilde H \tilde a(t) dt + \tilde K d\mathcal{A}(t)
\end{eqnarray}
where
\begin{eqnarray}
\label{annihilFGHK}
\tilde F &=& -i \Theta_1 \tilde M_1 -\frac{1}{2}\Theta_1
\tilde N_1^\dagger \tilde N_1;\nonumber  \\ 
\tilde G &=& -\Theta_1 \tilde N_1^\dagger S;\nonumber \\
\tilde H &=& \tilde N_1;\nonumber \\
\tilde K &=& S; \nonumber \\
\Theta_1 &=& T_1T_1^\dagger > 0.
\end{eqnarray}

\subsection{Position and momentum operator linear quantum systems}
\label{subsec:real}
Note that the matrices in the general QSDEs (\ref{qsde3}),
(\ref{tildeFGHK1}) are in general 
complex. However, it is possible to apply a particular change of
variables to the system (\ref{qsde1}) so that all of the matrices in
the resulting transformed QSDEs are real. This change of variables is
defined as follows:
\begin{eqnarray}
\label{posmom1}
\left[\begin{array}{c}q \\ p \end{array}\right] &=& \Phi
\left[\begin{array}{c}a \\ a^\# \end{array}\right]; \nonumber \\
\left[\begin{array}{c}\mathcal{Q}(t) \\ \mathcal{P}(t) \end{array}\right] &=& \Phi
\left[\begin{array}{c}\mathcal{A}(t) \\
    \mathcal{A}(t)^\# \end{array}\right];\nonumber \\
\left[\begin{array}{c}\mathcal{Q}^{out}(t) \\ \mathcal{P}^{out}(t) \end{array}\right] &=& \Phi
\left[\begin{array}{c}\mathcal{A}^{out}(t) \\ \mathcal{A}^{out}(t)^\# \end{array}\right]
\end{eqnarray}
where the matrices $\Phi$ have the form
\begin{equation}
\label{Phi}
\Phi = \left[\begin{array}{cc}I & I \\
-iI & iI\end{array}\right] 
\end{equation}
and have the appropriate dimensions. 
 Here
$q$ is a vector of the self-adjoint position operators for the system
of harmonic oscillators and $p$ is a vector of momentum operators;
e.g., see \cite{ShP5a,JNP1,NJP1,NPJ1}. Also, $\mathcal{Q}(t)$ and
$\mathcal{P}(t)$ are the vectors of position and momentum operators
for the quantum noise fields acting on the system of harmonic
oscillators. Furthermore, $\mathcal{Q}^{out}(t)$ and
$\mathcal{P}^{out}(t)$ are the vectors of position and momentum operators
for the output quantum noise fields. 

It follows from (\ref{Phi}) that
\begin{equation}
\label{PhiPhidagger}
\Phi \Phi^\dagger = 2I
\end{equation}
and hence 
\begin{equation}
\label{invPhi2}
\Phi^{-1}\Phi^{-\dagger} = \frac{1}{2}I.
\end{equation}

Rather than applying the transformations (\ref{posmom1}) to the
quantum linear system (\ref{qsde1a}) which satisfies the canonical
commutation relations (\ref{CCR2}), corresponding transformations can
be applied to the quantum linear system (\ref{qsde3}) which satisfies
the generalized commutation relations (\ref{CCR3}). These
transformations are as follows:
\begin{eqnarray}
\label{posmom2}
\left[\begin{array}{c}\tilde q \\ \tilde p \end{array}\right] &=& \Phi
\left[\begin{array}{c}\tilde a \\ \tilde a^\# \end{array}\right]; \nonumber \\
\left[\begin{array}{c}\mathcal{Q}(t) \\ \mathcal{P}(t) \end{array}\right] &=& \Phi
\left[\begin{array}{c}\mathcal{A}(t) \\
    \mathcal{A}(t)^\# \end{array}\right];\nonumber \\
\left[\begin{array}{c}\mathcal{Q}^{out}(t) \\ \mathcal{P}^{out}(t) \end{array}\right] &=& \Phi
\left[\begin{array}{c}\mathcal{A}^{out}(t) \\ \mathcal{A}^{out}(t)^\# \end{array}\right].
\end{eqnarray}
When these transformations are applied to the quantum linear system
(\ref{qsde3}), this leads to the following real quantum linear system:
\begin{eqnarray}
\label{qsde5}
\left[\begin{array}{l} d\tilde q(t)\\d\tilde p(t)\end{array}\right] &=& 
A\left[\begin{array}{l} \tilde p(t)\\\tilde q(t)\end{array}\right]dt 
+ B \left[\begin{array}{l} d\mathcal{Q}(t)
\\ d\mathcal{P}(t) \end{array}\right];  \nonumber \\
\left[\begin{array}{l} d\mathcal{Q}^{out}(t)
\\ d\mathcal{P}^{out}(t) \end{array}\right] &=& 
C \left[\begin{array}{l} \tilde q(t)\\\tilde p(t)\end{array}\right]dt 
+ D \left[\begin{array}{l} d\mathcal{P}(t)
\\ d\mathcal{Q}(t) \end{array}\right],\nonumber \\
\end{eqnarray}
where
\begin{eqnarray}
\label{ABCD}
A &=& \Phi \tilde F \Phi^{-1} \nonumber \\
&=&\frac{1}{2}\left[\begin{array}{c}
\tilde F_1 + \tilde F_1^\#+ \tilde F_2 +\tilde F_2^\#  \\
-i\left(\tilde F_1 - \tilde F_1^\#\right) - i \left(\tilde F_2 -\tilde
  F_2^\#\right)  
\end{array}\right. \nonumber \\
&& \hspace{1cm} \left.\begin{array}{c}
i\left(\tilde F_1 - \tilde F_1^\#\right) - i \left(\tilde F_2 -\tilde
  F_2^\#\right) \\
\tilde F_1 + \tilde F_1^\# - \tilde F_2 -\tilde F_2^\#
\end{array}\right]; \nonumber \\
B &=& \Phi \tilde G \Phi^{-1} \nonumber \\
&=&\frac{1}{2}\left[\begin{array}{c}
\tilde G_1 + \tilde G_1^\#+ \tilde G_2 +\tilde G_2^\#  \\
-i\left(\tilde G_1 - \tilde G_1^\#\right) - i \left(\tilde G_2 -\tilde
  G_2^\#\right)  
\end{array}\right. \nonumber \\
&& \hspace{1cm} \left.\begin{array}{c}
i\left(\tilde G_1 - \tilde G_1^\#\right) - i \left(\tilde G_2 -\tilde
  G_2^\#\right) \\
\tilde G_1 + \tilde G_1^\# - \tilde G_2 -\tilde G_2^\#
\end{array}\right]; \nonumber \\
C &=& \Phi \tilde H \Phi^{-1} \nonumber \\
&=&\frac{1}{2}\left[\begin{array}{c}
\tilde H_1 + \tilde H_1^\#+ \tilde H_2 +\tilde H_2^\#  \\
-i\left(\tilde H_1 - \tilde H_1^\#\right) - i \left(\tilde H_2 -\tilde
  H_2^\#\right)  
\end{array}\right. \nonumber \\
&& \hspace{1cm} \left.\begin{array}{c}
i\left(\tilde H_1 - \tilde H_1^\#\right) - i \left(\tilde H_2 -\tilde
  H_2^\#\right) \\
\tilde H_1 + \tilde H_1^\# - \tilde H_2 -\tilde H_2^\#
\end{array}\right]; \nonumber \\
D &=& \Phi \tilde K \Phi^{-1} \nonumber \\
&=&\frac{1}{2}\left[\begin{array}{c}
\tilde K_1 + \tilde K_1^\#+ \tilde K_2 +\tilde K_2^\#  \\
-i\left(\tilde K_1 - \tilde K_1^\#\right) - i \left(\tilde K_2 -\tilde
  K_2^\#\right)  
\end{array}\right. \nonumber \\
&& \hspace{1cm} \left.\begin{array}{c}
i\left(\tilde K_1 - \tilde K_1^\#\right) - i \left(\tilde K_2 -\tilde
  K_2^\#\right) \\
\tilde K_1 + \tilde K_1^\# - \tilde K_2 -\tilde K_2^\#
\end{array}\right]. \nonumber \\
\end{eqnarray}
These matrices are all real. 

Also, it follows
from (\ref{CCR3}) that
\begin{eqnarray}
\label{quad_comm}
\left[\left[\begin{array}{l}
      \tilde q\\\tilde p\end{array}\right],\left[\begin{array}{l}
      \tilde q\\\tilde p\end{array}\right]^\dagger\right]
&=& \Xi
\end{eqnarray}
where 
\begin{equation}
\label{Lambda}
 \Xi = \Phi \Theta \Phi^\dagger = \Phi TJT^\dagger \Phi^\dagger,
\end{equation}
which is a Hermitian matrix. 

Now, we can re-write the operators $H$ and $L$ defining the above
collection of quantum harmonic oscillators in terms of the variables
$\tilde q$ and $\tilde p$ as 
\[
H = \frac{1}{2}\left[\begin{array}{cc}\tilde q^T &
      \tilde p^T\end{array}\right]R
\left[\begin{array}{c}\tilde q \\ \tilde p\end{array}\right], ~~
\left[\begin{array}{c}\tilde L \\ \tilde L^\#\end{array}\right] =  V
\left[\begin{array}{c}\tilde q \\ \tilde p\end{array}\right]
\]
where 
\begin{equation}
\label{realHL}
R = \left(\Phi^\dagger\right)^{-1} \tilde M \Phi^{-1}, ~~
V = \tilde N\Phi^{-1}.
\end{equation}
Here
\begin{eqnarray}
\label{RV}
R
&=&\frac{1}{4}\left[\begin{array}{c}
\tilde M_1 + \tilde M_1^\#+ \tilde M_2 +\tilde M_2^\#  \\
-i\left(\tilde M_1 - \tilde M_1^\#\right) - i \left(\tilde M_2 -\tilde
  \tilde M_2^\#\right)  
\end{array}\right. \nonumber \\
&& \hspace{1cm} \left.\begin{array}{c}
i\left(\tilde M_1 - \tilde M_1^\#\right) - i \left(\tilde M_2 -\tilde
  \tilde M_2^\#\right) \\
\tilde M_1 + \tilde M_1^\# - \tilde M_2 -\tilde M_2^\#
\end{array}\right]; \nonumber \\
V &=& \left[\begin{array}{cc}\tilde N_1 + \tilde N_2 &
i\left(\tilde N_1 -     \tilde N_2\right)\\
\tilde N_1^\# + \tilde N_2^\# &
i\left(\tilde N_2^\# -     \tilde N_1^\#\right)
\end{array}\right]
\end{eqnarray}
where the matrix $R$ is real but the matrix $V$ may be complex. 

However, using (\ref{realHL}) and (\ref{generalizedFGHK1}), we can write
\[
\left[\begin{array}{c}\tilde L+\tilde L^\# \\ \frac{\tilde L -\tilde L^\#}{i}\end{array}\right]
=\Phi\left[\begin{array}{c}\tilde L \\ \tilde L^\#\end{array}\right]  = \Phi V 
\left[\begin{array}{c}\tilde q \\ \tilde p\end{array}\right] = W \left[\begin{array}{c}\tilde q \\ \tilde p\end{array}\right]
\]
where 
\begin{eqnarray*}
W&=& \Phi V = \Phi \tilde N\Phi^{-1}\nonumber \\
&=& \frac{1}{2}\left[\begin{array}{c}
\tilde N_1 + \tilde N_1^\#+ \tilde N_2 +\tilde N_2^\#  \\
-i\left(\tilde N_1 - \tilde N_1^\#\right) - i \left(\tilde N_2 -\tilde
  N_2^\#\right)  
\end{array}\right. \nonumber \\
&& \hspace{1cm} \left.\begin{array}{c}
i\left(\tilde N_1 - \tilde N_1^\#\right) - i \left(\tilde N_2 -\tilde
  N_2^\#\right) \\
\tilde N_1 + \tilde N_1^\# - \tilde N_2 -\tilde N_2^\#
\end{array}\right]
\end{eqnarray*}
is real as in (\ref{ABCD}). That is, we can write
\[
\left[\begin{array}{c}\Re(\tilde L) \\ \Im(\tilde L) \end{array}\right] 
= \frac{1}{2} W \left[\begin{array}{c}\tilde q \\ \tilde p\end{array}\right].
\]

% Now let
% \begin{equation}
% \label{Xiform}
% \Xi = \Xi^\dagger  = \Phi \Theta \Phi^\dagger =\Phi TJT^\dagger \Phi^\dagger.
% \end{equation}
Note that the matrix $\Phi T \Phi^{-1}$ is real and 
\begin{equation}
\label{Jtilde}
\Phi J \Phi^\dagger = 2i\left[\begin{array}{cc}0 & I \\
-I & 0\end{array}\right] = 2i\tilde J
\end{equation}
where
\[
\tilde J = \left[\begin{array}{cc}0 & I \\
-I & 0\end{array}\right].
\]
Hence, the matrix 
\[
\Xi =\Phi T \Phi^{-1} \Phi J \Phi^\dagger
\left(\Phi^\dagger\right)^{-1} T^\dagger \Phi^\dagger
\]
must be purely imaginary. Hence, we can define the real skew symmetric matrix
\[
\tilde \Theta = -\frac{i}{2} \Xi = -\frac{i}{2}\Phi \Theta \Phi^\dagger.
\]
Using this notation, (\ref{quad_comm}) can be written as
\[
\left[\left[\begin{array}{l}
      \tilde q\\\tilde p\end{array}\right],\left[\begin{array}{l}
      \tilde q\\\tilde p\end{array}\right]^\dagger\right] = 2i\tilde \Theta. 
\]

In addition, we note that
\begin{equation}
\label{Jtilde1}
\left(\Phi^\dagger\right)^{-1} J \Phi^{-1} = \frac{i}{2}\tilde J.
\end{equation}

Furthermore, equations (\ref{generalizedFGHK1}), (\ref{ABCD}), and (\ref{RV}) can be combined to 
obtain
\begin{eqnarray}
\label{ABCD1}
A &=& -i \Xi R -\frac{1}{2} \Xi V^\dagger J V \nonumber \\
&=& 2\tilde \Theta R -\frac{1}{2} \Xi W^T \Phi^{-\dagger}J\Phi^{-1}W\nonumber \\
&=& 2\tilde \Theta R-\frac{i}{4}\Xi W^T\tilde JW\nonumber \\
&=& 2\tilde \Theta R+\frac{1}{2}\tilde \Theta W^T\tilde JW;\nonumber \\
B &=& -\Xi  V^\dagger J
\left[\begin{array}{cc}S & 0 \\
0 & S^\#\end{array}\right]\Phi^{-1} \nonumber \\
&=& -\Xi W^T \Phi^{-\dagger}J\Phi^{-1} \Phi \left[\begin{array}{cc}S & 0 \\
0 & S^\#\end{array}\right]\Phi^{-1}\nonumber \\
&=& -\frac{i}{2}\Xi W^T \tilde J D\nonumber \\
&=& \tilde \Theta W^T \tilde J D;\nonumber \\
C &=& W;\nonumber \\
D &=& \frac{1}{2}\left[\begin{array}{cc}S +S^\#& i\left(S-S^\#\right) \\
i\left(S-S^\#\right) & S+S^\#\end{array}\right].
\end{eqnarray}

Now from (\ref{ABCD}), we have $D = \Phi \tilde K \Phi^{-1}$ and $D^T = \Phi^{-\dagger} \tilde K^\dagger \Phi^{\dagger}$ and hence,
\begin{eqnarray}
\label{DDT}
DD^T &=& \Phi \tilde K \Phi^{-1}\Phi^{-\dagger} \tilde K \Phi^{\dagger}\nonumber \\
&=& \frac{1}{2}  \Phi \tilde K  \tilde K^\dagger \Phi^{\dagger}\nonumber \\
&=& \frac{1}{2}  \Phi \Phi^{\dagger}\nonumber \\
&=& I
\end{eqnarray}
using (\ref{invPhi2}), (\ref{Kunitary}) and (\ref{PhiPhidagger}); i.e., $D$ is an orthogonal matrix. 

Also, we have
\begin{eqnarray}
\label{DJDT}
D\tilde J D^T &=& -\frac{i}{2}\Phi \tilde K \Phi^{-1}\Phi J \Phi^\dagger \Phi^{-\dagger} \tilde K \Phi^{\dagger}\nonumber \\
&=& -\frac{i}{2}  \Phi \tilde K J \tilde K^\dagger \Phi^{\dagger}\nonumber \\
&=&- \frac{i}{2}  \Phi J\Phi^{\dagger}\nonumber \\
&=& \tilde J
\end{eqnarray}
using (\ref{Jtilde}) and (\ref{KJKdagger}); i.e., $D$ is a symplectic matrix. 

Conversely suppose a matrix $D= \Phi \tilde K \Phi^{-1}$ satisfies $DD^T = I$ and $D\tilde J D^T = \tilde J$. It follows in a similar fashion to above that $\tilde K \tilde K^\dagger = I$ and $\tilde K J \tilde K^\dagger = J$. Hence, as in Section \ref{subsec:harmonic_oscillator} the matrix $\tilde K$ must be of the form in (\ref{generalizedFGHK1}). Thus, the matrix $D$ must be of the form in (\ref{ABCD1}). 

\section{Physical Realizability}
\label{sec:physical_realizability}
Not all QSDEs of the form (\ref{qsde3}),
(\ref{tildeFGHK1}) correspond to physical quantum systems. This
motivates a notion of physical realizability which has been considered
in the papers
\cite{JNP1,NJP1,MaP1a,MaP2a,MaP3,MaP4,PET08Aa,PET09Aa,ShP5a,HP4a}. 
This notion is of particular importance in the problem of coherent
quantum feedback control in which the controller itself is a quantum
system. In this case, if a controller is synthesized using a method
such as quantum $H^\infty$ control \cite{JNP1,MaP1a,MaP4} or quantum
LQG control \cite{NJP1,HP4a}, it important that the controller can be
implemented as a physical quantum system \cite{NJD09,PET08Aa}. 
We first consider the issue of physical realizability in the case of
general linear quantum systems and then we consider the issue of physical
realizability for the case of annihilator operator linear quantum
system of the form considered in Subsection \ref{subsec:annihilator}.

\subsection{Physical realizability for general linear quantum systems}
\label{subsec:genphysrealiz}
The
formal definition of physically realizable QSDEs requires that they can
be realized as a system of quantum harmonic oscillators. 

\begin{definition}
\label{D1}
 QSDEs of the form (\ref{qsde3}),
(\ref{tildeFGHK1}) are {\em physically realizable} if there exist complex
matrices $\Theta= \Theta^\dagger$, $\tilde M = \tilde M^\dagger$, 
% $\tilde M_2 =
% \tilde M_2^T$, $\tilde N_1$,
$\tilde N$, $S$ such that $S^\dagger S = I$, $\Theta$ is of the form in
(\ref{Theta}), $\tilde M$ is of the form
in 
(\ref{tildeMN}), and
(\ref{generalizedFGHK1}) is satisfied.  
\end{definition}

A version of the following theorem was presented in \cite{ShP5a}; see also
\cite{NJP1,JNP1} for related results.

\begin{theorem}
\label{T1}
The QSDEs (\ref{qsde3}), (\ref{tildeFGHK1}) are physically realizable if
and only if there exists a complex matrix $\Theta =\Theta^\dagger$
 of the form in
(\ref{Theta}) such that
\begin{eqnarray}
\label{physreal1}
&&\tilde F\Theta 
+ \Theta \tilde F^\dagger + \tilde GJ\tilde G^\dagger = 0;
\nonumber \\
&&\tilde G =  -\Theta \tilde H^\dagger J \tilde K; \nonumber \\
&&\tilde K J \tilde K^\dagger =J; \nonumber \\
&&\tilde K \tilde K^\dagger =I.
\end{eqnarray}
\end{theorem}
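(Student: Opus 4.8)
The plan is to prove the two directions separately, using Definition \ref{D1} and the explicit formulas (\ref{generalizedFGHK1}) throughout. For necessity, I would assume the QSDEs are physically realizable, so that there exist $\Theta=\Theta^\dagger$, $\tilde M=\tilde M^\dagger$, $\tilde N$, and unitary $S$ of the required forms with (\ref{generalizedFGHK1}) holding. The last two identities in (\ref{physreal1}) are then immediate: since $\tilde K=\diag(S,S^\#)$ with $S$ unitary, they are exactly the computations (\ref{KJKdagger}) and (\ref{Kunitary}). The identity $\tilde G=-\Theta\tilde H^\dagger J\tilde K$ follows by substituting $\tilde H=\tilde N$ and $J\tilde K=\diag(S,-S^\#)$ into the expression for $\tilde G$ in (\ref{generalizedFGHK1}). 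The first identity is the only one needing real work: using $\tilde M=\tilde M^\dagger$, $\Theta=\Theta^\dagger$ and $J=J^\dagger$, the $\mp i\Theta\tilde M\Theta$ contributions in $\tilde F\Theta+\Theta\tilde F^\dagger$ cancel, leaving $\tilde F\Theta+\Theta\tilde F^\dagger=-\Theta\tilde N^\dagger J\tilde N\Theta$; separately $\tilde GJ\tilde G^\dagger=\Theta\tilde N^\dagger J\tilde N\Theta$ because the unitarity of $S$ makes $\diag(S,-S^\#)\,J\,\diag(S^\dagger,-S^T)=J$. Adding the two gives the first equation in (\ref{physreal1}).

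For sufficiency, I would start from the four identities (\ref{physreal1}) and reconstruct the data required by Definition \ref{D1}. Since the QSDEs (\ref{qsde3}), (\ref{tildeFGHK1}) already force $\tilde K$ into the block form of Lemma \ref{L1}, and since $\tilde KJ\tilde K^\dagger=J$, $\tilde K\tilde K^\dagger=I$ with $\tilde K$ nonsingular, the argument already given in Section \ref{subsec:harmonic_oscillator} applies verbatim and yields a unitary $S$ with $\tilde K=\diag(S,S^\#)$. I would then set $\tilde N\defeq\tilde H$ (which has the correct block form because $\tilde H$ does) and define $\tilde M\defeq i\Theta^{-1}\tilde F+\tfrac{i}{2}\tilde N^\dagger J\tilde N$, which is exactly the unique candidate making the $\tilde F$-equation of (\ref{generalizedFGHK1}) hold; here $\Theta$ is invertible because $\Theta=TJT^\dagger$ with $T$ nonsingular. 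The $\tilde G$-equation then follows by reversing the substitution used in the necessity direction.

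The crux of the argument, and the step I expect to be the main obstacle, is verifying that this $\tilde M$ genuinely satisfies the constraints of Definition \ref{D1}, namely $\tilde M=\tilde M^\dagger$ and the block form (\ref{tildeMN}). Hermiticity is where the first realizability identity is essential: the condition $\tilde M=\tilde M^\dagger$ reduces, after multiplying through by $\Theta$ on both sides, to $\tilde F\Theta+\Theta\tilde F^\dagger+\Theta\tilde N^\dagger J\tilde N\Theta=0$, and I would convert $\Theta\tilde N^\dagger J\tilde N\Theta$ back into $\tilde GJ\tilde G^\dagger$ using the second and third identities of (\ref{physreal1}) together with $J^2=I$, so that the required equation is precisely the first identity of (\ref{physreal1}). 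The block form requires a careful bookkeeping of the symmetry relation of Lemma \ref{L1} under $\Sigma$: one checks that $\tilde F$, $\tilde N$, and $\tilde N^\dagger$ satisfy $X\Sigma=\Sigma X^\#$, whereas $\Theta$ (hence $\Theta^{-1}$) and $J$ satisfy the opposite sign $X\Sigma=-\Sigma X^\#$, and that scalar multiplication by $i$ interchanges these two classes. Tracking signs through the products $i\Theta^{-1}\tilde F$ and $\tfrac{i}{2}\tilde N^\dagger J\tilde N$ then shows each summand, and hence $\tilde M$, obeys $\tilde M\Sigma=\Sigma\tilde M^\#$, which by Lemma \ref{L1} is exactly the block form (\ref{tildeMN}). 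Combining these steps establishes that the constructed $\Theta,\tilde M,\tilde N,S$ witness physical realizability.
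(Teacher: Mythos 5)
Your proposal is correct and follows essentially the same route as the paper's proof: necessity by direct substitution into (\ref{generalizedFGHK1}) together with (\ref{KJKdagger}), (\ref{Kunitary}); sufficiency by recovering the unitary $S$ from the two $\tilde K$ conditions via the Section II.A argument, setting $\tilde N = \tilde H$, constructing $\tilde M$, and invoking Lemma \ref{L1} for the block form. The only (immaterial) difference is that you define $\tilde M = i\Theta^{-1}\tilde F + \tfrac{i}{2}\tilde N^\dagger J \tilde N$ so that the $\tilde F$-equation holds by construction and then use the first identity of (\ref{physreal1}) to prove Hermiticity, whereas the paper defines the manifestly Hermitian $\tilde M = \tfrac{i}{2}\left(\Theta^{-1}\tilde F - \tilde F^\dagger \Theta^{-1}\right)$ and uses that same identity to derive the $\tilde F$-equation; under (\ref{physreal1}) these two definitions coincide.
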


\begin{proof}
If there exist matrices $\Theta =\Theta^\dagger$,
$\tilde M = \tilde M^\dagger$, 
% $\tilde M_2 =
% \tilde M_2^T$, $\tilde N_1$,
$\tilde N$, $S$ such that $S^\dagger S = I$, $\tilde M$ is of the form
in 
(\ref{tildeMN}),  $\Theta$ is of the form in
(\ref{Theta}), and
(\ref{generalizedFGHK1}) is satisfied, then it follows by
straightforward substitution that the first equation in (\ref{physreal1}) will be
satisfied and 
\begin{eqnarray}
\label{physreal1a}
&&\tilde G =  -\Theta \tilde H^\dagger\left[\begin{array}{cc}S & 0 \\
0 & -S^\#\end{array}\right] ; \nonumber \\
&&\tilde K = \left[\begin{array}{cc}S & 0 \\
0 & S^\#\end{array}\right].
\end{eqnarray}
Then, the remaining equations in (\ref{physreal1}) follow using (\ref{KJKdagger}) and (\ref{Kunitary}). 

Converely, suppose  there exists a complex matrix $\Theta =\Theta^\dagger$ of the form in
(\ref{Theta}) such that  (\ref{physreal1}) is
satisfied. Also, as shown at the end of Section II.A, the conditions $\tilde K J \tilde K^\dagger =J$ and $\tilde K \tilde K^\dagger =I$ imply that there exists a complex matrix $S$ such that $S^\dagger S = I$ and $\tilde K$ is of the form $\tilde K = \left[\begin{array}{cc}S & 0 \\
0 & S^\#\end{array}\right].$

Also, let 
\begin{eqnarray*}
\tilde M &=& \frac{i}{2} \left(\Theta^{-1}\tilde F - \tilde F^\dagger \Theta^{-1}\right); \nonumber \\
\tilde N &=& \tilde H.
\end{eqnarray*}
It is straightforward to verify that this matrix $\tilde M$ is
Hermitian. Also, it follows from (\ref{physreal1}) that 
\begin{eqnarray*}
\tilde G
&=&  -\Theta \tilde N^\dagger \left[\begin{array}{cc}S & 0 \\
0 & -S^\#\end{array}\right]
\end{eqnarray*}
 as required. 
Furthermore, using $S^\dagger S = I$, it now follows that 
\[
\tilde G J \tilde G^\dagger = \Theta \tilde N^\dagger J \tilde N \Theta.
\]
Hence, (\ref{physreal1}) implies
\begin{eqnarray*}
%\lefteqn{
\tilde F\Theta+\Theta\tilde F^\dagger
% } 
% \nonumber \\
% && 
+ \Theta \tilde N^\dagger J \tilde N \Theta = 0
\end{eqnarray*}
and hence
\begin{eqnarray*}
\tilde F^\dagger \Theta^{-1} &=& -\Theta^{-1}\tilde F 
- \tilde N^\dagger J \tilde N 
\end{eqnarray*}
From this, it follows that 
\[
\tilde M = \frac{i}{2} \left(2\Theta^{-1}\tilde F + \tilde N^\dagger J \tilde N \right) 
\]
and hence,
\[
\tilde F = -i \Theta \tilde M -\frac{1}{2} \Theta \tilde N^\dagger J \tilde N
\]
as required. Hence, (\ref{generalizedFGHK1}) is satisfied. 

We now use
Lemma \ref{L1} to show that $\tilde M$ is of the form
in 
(\ref{tildeMN}). Indeed, we have $T\Sigma = \Sigma T^\#$, $T^\#\Sigma
= \Sigma T$, $T^{-1}\Sigma = \Sigma \left(T^\#\right)^{-1}$,
$\left(T^\#\right)^{-1}\Sigma = \Sigma T^{-1} $,  $\tilde F\Sigma =
\Sigma \tilde F^\#$, $\tilde F^\#\Sigma
= \Sigma \tilde F$, and $\Sigma J = -J \Sigma$. Hence, 
\begin{eqnarray*}
\Sigma \tilde M^\# &=& 
-\frac{i}{2}
\left(\begin{array}{c}\Sigma\left(T^T\right)^{-1}J\left(T^\#\right)^{-1}\tilde
    F^\# \\
-\Sigma \tilde F^T
  \left(T^T\right)^{-1}J\left(T^\#\right)^{-1}\end{array}\right)
\nonumber \\
&=& 
\frac{i}{2}
\left(\begin{array}{c}\left(T^\dagger\right)^{-1}JT^{-1}\tilde F \\
-\tilde F^\dagger
  \left(T^\dagger\right)^{-1}JT^{-1}\end{array}\right)\Sigma\nonumber
\\
&=&  \tilde M\Sigma.
\end{eqnarray*}
Therefore, it follows from Lemma \ref{L1} that $\tilde M$ is of the form
in 
(\ref{tildeMN})
 and hence, the QSDEs (\ref{qsde3}),
(\ref{tildeFGHK1}) are physically realizable.
\end{proof}

\begin{remark}
\label{R1}
In the canonical case when $T =I$ and $\Theta = J$, the physical
realizability equations (\ref{physreal1}) become
\begin{eqnarray}
\label{canonphysreal}
&&\tilde FJ 
+ J\tilde F^\dagger + \tilde GJ\tilde G^\dagger = 0;
\nonumber \\
&&\tilde G =  -J \tilde H^\dagger J \tilde K; \nonumber \\
&&\tilde K J \tilde K^\dagger =J; \nonumber \\
&&\tilde K \tilde K^\dagger =I.
\end{eqnarray}
\end{remark}

Following the approach of \cite{ShP5a}, we now relate the physical
realizability of  the QSDEs (\ref{qsde3}), (\ref{tildeFGHK1}) to the
dual $(J,J)$-unitary property of the corresponding transfer function matrix
\begin{equation}
\label{TF}
\Gamma(s) = \left[\begin{array}{cc}\Gamma_{11}(s) & \Gamma_{12}(s) \\
\Gamma_{21}(s) & \Gamma_{22}(s)\end{array}\right]= \tilde H\left(sI-\tilde F\right)^{-1}\tilde G+\tilde K.
\end{equation}

\begin{definition} (See \cite{ShP5a,KIM97}.)
\label{D2}
A transfer function matrix $\Gamma(s)$ of the form (\ref{TF}) is {\em
  dual $(J,J)$-unitary} if 
\[
\Gamma(s)J\Gamma^\sim(s) = J
\]
for all $s \in \mathbb{C}_+$. 
\end{definition}
Here, $\Gamma^\sim(s) =
\Gamma(-s^*)^\dagger$ and $\mathbb{C}_+$ denotes the set $\{s \in
\mathbb{C}: \Re[s] \geq 0\}$. 

\begin{theorem}
\label{T2}
The transfer function matrix (\ref{TF}) corresponding to the QSDEs
(\ref{qsde3}), (\ref{tildeFGHK1}) is 
 dual $ (J,J)$-unitary if and only if
 \begin{equation}
\label{KJKdagger1}
\tilde K J \tilde K^\dagger =   J,
 \end{equation}
  and  there exists a
  Hermitian matrix $\Theta$ such that
 \begin{eqnarray}
\label{JJunitary}
 \tilde F\Theta + \Theta\tilde F^\dagger +
  \tilde GJ\tilde G^\dagger &=&0;\nonumber \\
   \tilde KJ \tilde G^\dagger +
     \tilde H \Theta &=&0.
 \end{eqnarray}
\end{theorem}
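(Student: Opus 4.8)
The plan is to evaluate $\Gamma(s)J\Gamma^\sim(s)$ directly. By Definition \ref{D2}, $\Gamma^\sim(s)=\tilde G^\dagger(-sI-\tilde F^\dagger)^{-1}\tilde H^\dagger+\tilde K^\dagger$, so writing $\Pi=(sI-\tilde F)^{-1}$ and $\Omega=(-sI-\tilde F^\dagger)^{-1}$ the only analytic facts needed are the resolvent identities $\tilde F\Pi=\Pi\tilde F=s\Pi-I$ and $\tilde F^\dagger\Omega=\Omega\tilde F^\dagger=-s\Omega-I$; everything else is linear algebra. For the \emph{if} direction I assume (\ref{KJKdagger1}) and that a Hermitian $\Theta$ satisfies (\ref{JJunitary}), and expand
\[
\Gamma J\Gamma^\sim=\tilde H\Pi\tilde GJ\tilde G^\dagger\Omega\tilde H^\dagger+\tilde H\Pi\tilde GJ\tilde K^\dagger+\tilde KJ\tilde G^\dagger\Omega\tilde H^\dagger+\tilde KJ\tilde K^\dagger .
\]
Substituting $\tilde GJ\tilde G^\dagger=-\tilde F\Theta-\Theta\tilde F^\dagger$ from the first line of (\ref{JJunitary}) into the leading term and applying the resolvent identities collapses it to $\tilde H\Theta\Omega\tilde H^\dagger+\tilde H\Pi\Theta\tilde H^\dagger$. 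The second line of (\ref{JJunitary}), together with its adjoint $\Theta\tilde H^\dagger=-\tilde GJ\tilde K^\dagger$, rewrites these two pieces as $-\tilde KJ\tilde G^\dagger\Omega\tilde H^\dagger$ and $-\tilde H\Pi\tilde GJ\tilde K^\dagger$, which cancel the two cross terms exactly. What survives is $\tilde KJ\tilde K^\dagger=J$ by (\ref{KJKdagger1}); since $s$ entered only through the resolvent identities, $\Gamma J\Gamma^\sim=J$ holds on all of $\mathbb{C}_+$.

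For the \emph{only if} direction, letting $|s|\to\infty$ in $\Gamma J\Gamma^\sim=J$ annihilates $\Pi$ and $\Omega$ and leaves $\tilde KJ\tilde K^\dagger=J$, which is (\ref{KJKdagger1}). To produce $\Theta$ I would realise the product in state space as $\Gamma J\Gamma^\sim=\tilde KJ\tilde K^\dagger+C_c(sI-A_c)^{-1}B_c$ with
\[
A_c=\left[\begin{array}{cc}\tilde F & -\tilde GJ\tilde G^\dagger\\ 0 & -\tilde F^\dagger\end{array}\right],\quad B_c=\left[\begin{array}{c}\tilde GJ\tilde K^\dagger\\ \tilde H^\dagger\end{array}\right],\quad C_c=\left[\begin{array}{cc}\tilde H & -\tilde KJ\tilde G^\dagger\end{array}\right],
\]
so that dual $(J,J)$-unitarity forces $C_c(sI-A_c)^{-1}B_c\equiv 0$.

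Both directions in fact pivot on the block transformation $T=\left[\begin{array}{cc}I & -\Theta\\ 0 & I\end{array}\right]$. A short computation gives $T^{-1}A_cT=\diag(\tilde F,-\tilde F^\dagger)$ precisely when $\tilde F\Theta+\Theta\tilde F^\dagger+\tilde GJ\tilde G^\dagger=0$, i.e.\ the first line of (\ref{JJunitary}); and then $C_cT=[\,\tilde H\ \ -(\tilde H\Theta+\tilde KJ\tilde G^\dagger)\,]$ while $T^{-1}B_c=[\,\tilde GJ\tilde K^\dagger+\Theta\tilde H^\dagger\,;\ \tilde H^\dagger\,]$, whose off-diagonal blocks vanish precisely when $\tilde H\Theta+\tilde KJ\tilde G^\dagger=0$ (the second line of (\ref{JJunitary})) and its adjoint hold. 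Thus finding $\Theta$ is equivalent to finding an $A_c$-invariant subspace $\mathcal V$, expressible as the graph of $-\Theta$, with $\operatorname{Im}B_c\subseteq\mathcal V\subseteq\ker C_c$, and the existence of \emph{some} $A_c$-invariant $\mathcal V$ sandwiched this way is exactly the geometric content of the vanishing transfer function $C_c(sI-A_c)^{-1}B_c\equiv0$.

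The hard part is this last existence claim in the necessity direction: one must show the sandwiched invariant subspace can be taken complementary to the bottom coordinate block, so that it genuinely is the graph of a map $\Theta$, and that $\Theta$ may be chosen Hermitian. I expect to secure both from the structure of the cascade: when $\tilde F$ and $-\tilde F^\dagger$ have disjoint spectra (for instance whenever $\tilde F$ is Hurwitz) the Sylvester equation $\tilde F\Theta+\Theta\tilde F^\dagger=-\tilde GJ\tilde G^\dagger$ has a unique solution and the poles of $C_c(sI-A_c)^{-1}B_c$ split into two groups lying in disjoint half-planes; minimality of $(\tilde F,\tilde G,\tilde H)$ then forces the two off-diagonal blocks above to vanish individually, yielding the second line of (\ref{JJunitary}) and its adjoint, while Hermitian symmetry of $\Theta$ follows from the para-Hermitian identity $(\Gamma J\Gamma^\sim)^\sim=\Gamma J\Gamma^\sim$. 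Reducing the two matrix equations of (\ref{JJunitary}) to this single invariant-subspace (equivalently Sylvester-equation) statement is the step carried out in \cite{ShP5a,KIM97}.
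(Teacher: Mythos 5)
Your ``if'' direction is correct and complete: with $\Theta$ Hermitian, the resolvent identities collapse $\Gamma J\Gamma^\sim$ to $\tilde K J\tilde K^\dagger=J$ exactly as you computed, and the limit argument $|s|\to\infty$ giving (\ref{KJKdagger1}) in the converse is also sound, as is your cascade realization of $\Gamma J\Gamma^\sim$. Bear in mind that the paper itself states Theorem \ref{T2} without proof (it is imported from \cite{ShP5a,KIM97}), so your proposal can only be measured against the argument in those references.

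The gap is the one you flagged, and it is genuine: the existence of a Hermitian $\Theta$ in the ``only if'' direction. Your route needs two hypotheses that the theorem does not grant --- minimality of $(\tilde F,\tilde G,\tilde H)$ and $\sigma(\tilde F)\cap\sigma(-\tilde F^\dagger)=\emptyset$ --- and the first cannot be conjured away, because the statement is false without it. Concretely, take $\tilde F=-I$, $\tilde G=I$, $\tilde H=0$, $\tilde K=I$ (all of the doubled-up form (\ref{tildeFGHK1})): then $\Gamma(s)\equiv I$ is dual $(J,J)$-unitary, yet the second equation of (\ref{JJunitary}) reads $\tilde KJ\tilde G^\dagger+\tilde H\Theta=J\neq 0$, so no $\Theta$ exists. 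Hence necessity holds only under a minimality hypothesis (present in the cited sources, silently dropped in this survey's statement); any proof must either assume it or fail. The spectral-disjointness requirement is a second, independent defect of your plan: for doubled-up matrices $\sigma(\tilde F)$ is symmetric under complex conjugation, so $\sigma(-\tilde F^\dagger)=-\sigma(\tilde F)$, and the two spectra intersect whenever $\tilde F$ has an imaginary-axis eigenvalue --- which does occur for minimal, physically realizable (hence dual $(J,J)$-unitary) systems, e.g.\ a degenerate parametric amplifier at its instability threshold, where $0\in\sigma(\tilde F)$. So even with minimality added, your Sylvester/graph-subspace argument does not cover all cases in the theorem's scope. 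The classical proof (Kimura \cite{KIM97}, used in \cite{ShP5a}) avoids both problems: since (\ref{KJKdagger1}) makes $\tilde K$, hence $\Gamma$, invertible, dual $(J,J)$-unitarity is equivalent to $J\Gamma^\sim J=\Gamma^{-1}$; writing the natural state-space realizations of both sides (both minimal when the given realization is), the uniqueness of the similarity transformation between two minimal realizations of the same transfer function produces $\Theta$, yields both equations of (\ref{JJunitary}) directly, and gives $\Theta=\Theta^\dagger$ by applying that same uniqueness to the adjointed relations --- no spectral separation is required.
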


\begin{theorem}[See also \cite{ShP5a}.]
\label{T3}
If the QSDEs (\ref{qsde3}), (\ref{tildeFGHK1}) are physically realizable, 
then the corresponding transfer function matrix  (\ref{TF}) is 
 dual $ (J,J)$-unitary. 

Conversely,  suppose the QSDEs (\ref{qsde3}), (\ref{tildeFGHK1}) satisfy the
following conditions: 
\begin{enumerate}[(i)]
\item
The  transfer function matrix  (\ref{TF}) corresponding to
the QSDEs (\ref{qsde3}), (\ref{tildeFGHK1}) is   
 dual $(J,J)$-unitary;
\item
\begin{equation}
\label{KKdagger1}
\tilde K \tilde K^\dagger =I;
\end{equation}
\item
 The Hermitian matrix $\Theta$ satisfying (\ref{JJunitary}) is of the form in
(\ref{Theta}).
\end{enumerate}
Then, the QSDEs (\ref{qsde3}), (\ref{tildeFGHK1}) are physically realizable.
\end{theorem}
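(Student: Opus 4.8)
The plan is to deduce both implications directly from Theorems \ref{T1} and \ref{T2}, which together reduce the claim to a short algebraic equivalence. The Lyapunov-type equation $\tilde F\Theta + \Theta\tilde F^\dagger + \tilde GJ\tilde G^\dagger = 0$ appears verbatim in both (\ref{physreal1}) and (\ref{JJunitary}), and the constraint $\tilde KJ\tilde K^\dagger = J$ is common to both characterizations, so the only genuine work is to relate the physical-realizability coupling $\tilde G = -\Theta\tilde H^\dagger J\tilde K$ to the dual-unitarity coupling $\tilde KJ\tilde G^\dagger + \tilde H\Theta = 0$. The observation that drives this is that $\tilde KJ\tilde K^\dagger = J$ together with $\tilde K\tilde K^\dagger = I$ force $\tilde K$ to commute with $J$: right-multiplying $\tilde KJ\tilde K^\dagger = J$ by $\tilde K$ and using $\tilde K^\dagger\tilde K = I$ gives $\tilde KJ = J\tilde K$. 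Combined with $J^2 = I$, this is the entire algebraic toolkit.

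For the forward direction I would assume physical realizability and invoke Theorem \ref{T1} to obtain a Hermitian $\Theta$ of the form (\ref{Theta}) satisfying (\ref{physreal1}). The Lyapunov equation and $\tilde KJ\tilde K^\dagger = J$ are then immediate, so by Theorem \ref{T2} it remains only to verify the second line of (\ref{JJunitary}). Starting from $\tilde G = -\Theta\tilde H^\dagger J\tilde K$, taking adjoints and using $J^\dagger = J$, $\Theta^\dagger = \Theta$ gives $\tilde G^\dagger = -\tilde K^\dagger J\tilde H\Theta$; hence $\tilde KJ\tilde G^\dagger = -\tilde KJ\tilde K^\dagger J\tilde H\Theta = -JJ\tilde H\Theta = -\tilde H\Theta$, which is exactly $\tilde KJ\tilde G^\dagger + \tilde H\Theta = 0$. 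Thus $\Gamma(s)$ is dual $(J,J)$-unitary with the same $\Theta$.

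For the converse I would assume (i)--(iii) and apply Theorem \ref{T2} to (i), obtaining $\tilde KJ\tilde K^\dagger = J$ and a Hermitian $\Theta$ solving (\ref{JJunitary}); by (iii) this $\Theta$ has the form (\ref{Theta}), while (ii) supplies $\tilde K\tilde K^\dagger = I$. To apply Theorem \ref{T1} I need only recover $\tilde G = -\Theta\tilde H^\dagger J\tilde K$. Taking adjoints of $\tilde KJ\tilde G^\dagger + \tilde H\Theta = 0$ yields $\tilde GJ\tilde K^\dagger + \Theta\tilde H^\dagger = 0$; right-multiplying by $\tilde K$ and using $\tilde K^\dagger\tilde K = I$ gives $\tilde GJ = -\Theta\tilde H^\dagger\tilde K$, and right-multiplying by $J$ (with $J^2 = I$) gives $\tilde G = -\Theta\tilde H^\dagger\tilde KJ$. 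Finally the commutation $\tilde KJ = J\tilde K$ rewrites this as $\tilde G = -\Theta\tilde H^\dagger J\tilde K$, so all four conditions of (\ref{physreal1}) hold and physical realizability follows.

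The main obstacle is the converse, and specifically the necessity of hypothesis (ii). Dual $(J,J)$-unitarity alone only delivers $\tilde KJ\tilde K^\dagger = J$, which is insufficient to pass from the coupling in (\ref{JJunitary}) to the one in (\ref{physreal1}): I cannot cancel the stray $\tilde K^\dagger$ without $\tilde K^\dagger\tilde K = I$, nor move $J$ past $\tilde K$ without the commutation that $\tilde K\tilde K^\dagger = I$ provides. The genuine content of the converse is therefore that the two extra constraints (ii) and (iii) exactly fill the gap between the feedthrough structure guaranteed by dual $(J,J)$-unitarity and the stronger structure required for a physical quantum realization; everything else is the short adjoint-and-cancel computation above.
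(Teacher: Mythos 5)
Your proposal is correct and takes essentially the same route as the paper's proof: both directions reduce the claim to Theorems \ref{T1} and \ref{T2}, using the commutation relation $\tilde K J = J\tilde K$ (forced by $\tilde K J \tilde K^\dagger = J$ and $\tilde K \tilde K^\dagger = I$) together with the adjoint-and-cancel computation that converts between the coupling condition $\tilde G = -\Theta\tilde H^\dagger J \tilde K$ and the condition $\tilde K J \tilde G^\dagger + \tilde H\Theta = 0$. Your remark on why hypotheses (ii) and (iii) are exactly what is needed in the converse matches the role they play in the paper's argument.
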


\begin{proof}
If the QSDEs (\ref{qsde3}), (\ref{tildeFGHK1}) are physically
realizable, then it follows from Theorem \ref{T1} that 
there exist complex matrices $\Theta =\Theta^\dagger$
and $S$ such that $S^\dagger S = I$ and equations (\ref{physreal1})
are satisfied. However, $\tilde K J \tilde K^\dagger =J$ and $\tilde K \tilde K^\dagger =I$ imply $\tilde K J = J \tilde K$ and hence, it follows from (\ref{physreal1}) that  
\[
\tilde KJ \tilde G^\dagger +
     \tilde H \Theta = 0.
\]
That is, the conditions (\ref{JJunitary}) are satisfied and hence it
follows from Theorem \ref{T2} that the transfer function matrix  (\ref{TF}) corresponding to
the QSDEs (\ref{qsde3}), (\ref{tildeFGHK1}) is   
 dual $ (J,J)$-unitary. 

Conversely, if the QSDEs (\ref{qsde3}), (\ref{tildeFGHK1}) satisfy
conditions (i) - (iii) of the theorem, then it follows from Theorem
\ref{T2} that 
there exists a
  Hermitian matrix $\Theta$ of the form in
(\ref{Theta}) such that equations (\ref{JJunitary}) are
  satisfied. Hence, 
\begin{equation}
\label{io1}
\tilde K J\tilde G^\dagger+\tilde H \Theta = 0.
\end{equation}
Furthermore, we have $\tilde K J \tilde K^\dagger =J$ and $\tilde K \tilde K^\dagger =I$ and therefore  $\tilde K J = J \tilde K$. Hence, (\ref{io1}) implies
\[
\tilde G^\dagger +  \tilde K^\dagger J  \tilde H \Theta=0.
\]
From this it follows that equations (\ref{physreal1}) are
satisfied. Thus, it follows from Theorem \ref{T1} that  the QSDEs (\ref{qsde3}), (\ref{tildeFGHK1}) are physically
realizable.
\end{proof}

\subsection{Physical realizability for annihilator operator  linear quantum systems}
\label{subsec:annihphysrealiz}
For annilhilator operator linear quantum systems described by QSDEs of
the form (\ref{qsde4}) the corresponding formal
definition of physical realizability is as follows.

\begin{definition}(See \cite{MaP1a,MaP3,PET09Aa}.)
\label{D3}
The QSDEs of
the form (\ref{qsde4}) are  said to be {\em physically realizable} if
there exist matrices 
$\Theta_1 = \Theta_1^\dagger >0$,  $\tilde M_1 = \tilde M_1^\dagger$, 
$\tilde N$, and $S$ such that $S^\dagger S = I$ and
(\ref{annihilFGHK}) is satisfied.
\end{definition}

The following theorem from \cite{MaP1a,MaP3,PET09Aa} gives a characterization
of physical realizability in this case.

\begin{theorem}
\label{T4}
The QSDEs (\ref{qsde4}) are physically realizable if
and only if there exists a complex matrix $\Theta_1 = \Theta_1^\dagger
>0$ such that 
\begin{eqnarray}
\label{physreal2}
&&\tilde F \Theta_1 
+ \Theta_1 \tilde F^\dagger + \tilde G \tilde G^\dagger = 0;
\nonumber \\
&&\tilde G =  -\Theta_1 \tilde H^\dagger \tilde K ; \nonumber \\
&&\tilde K^\dagger \tilde K = I.
\end{eqnarray}
\end{theorem}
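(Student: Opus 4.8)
The plan is to mirror the proof of Theorem~\ref{T1}, specialised to the annihilation-operator setting where the indefinite matrix $J$ and the doubled-up block structure both disappear.

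For the forward implication I would assume physical realizability in the sense of Definition~\ref{D3}, so that matrices $\Theta_1=\Theta_1^\dagger>0$, $\tilde M_1 = \tilde M_1^\dagger$, $\tilde N_1$, and $S$ with $S^\dagger S = I$ are given and (\ref{annihilFGHK}) holds, and then verify the three displayed equations by direct substitution. The condition $\tilde K^\dagger \tilde K = I$ is immediate from $\tilde K = S$, and the condition $\tilde G = -\Theta_1 \tilde H^\dagger \tilde K$ follows by substituting $\tilde H = \tilde N_1$ and $\tilde K = S$ into $\tilde G = -\Theta_1 \tilde N_1^\dagger S$. For the Lyapunov-type equation I would substitute the expressions for $\tilde F$ and $\tilde G$; using $\Theta_1 = \Theta_1^\dagger$, $\tilde M_1 = \tilde M_1^\dagger$, and the fact that a square $S$ with $S^\dagger S = I$ also satisfies $SS^\dagger = I$ (so that $\tilde G\tilde G^\dagger = \Theta_1 \tilde N_1^\dagger \tilde N_1 \Theta_1$), the two Hamiltonian contributions $\mp i\Theta_1\tilde M_1\Theta_1$ cancel and the remaining coupling terms sum to zero.

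For the converse I would start from a given $\Theta_1=\Theta_1^\dagger>0$ satisfying the three equations and construct the required data. Set $S=\tilde K$, so $S^\dagger S = I$ follows from $\tilde K^\dagger \tilde K = I$, and set $\tilde N_1 = \tilde H$; then $\tilde H = \tilde N_1$ and $\tilde K = S$ hold, and the second equation reproduces $\tilde G = -\Theta_1 \tilde N_1^\dagger S$. The key step, exactly as in Theorem~\ref{T1}, is to define
\[
\tilde M_1 = \frac{i}{2}\left(\Theta_1^{-1}\tilde F - \tilde F^\dagger \Theta_1^{-1}\right),
\]
which is Hermitian because $\Theta_1^{-1}$ is Hermitian. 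I would multiply the first equation on both sides by $\Theta_1^{-1}$ and use $\tilde G\tilde G^\dagger = \Theta_1\tilde N_1^\dagger\tilde N_1\Theta_1$ to get $\tilde F^\dagger \Theta_1^{-1} = -\Theta_1^{-1}\tilde F - \tilde N_1^\dagger \tilde N_1$; substituting this into the definition of $\tilde M_1$ gives $\tilde M_1 = \frac{i}{2}\left(2\Theta_1^{-1}\tilde F + \tilde N_1^\dagger\tilde N_1\right)$, and rearranging yields precisely $\tilde F = -i\Theta_1\tilde M_1 - \frac{1}{2}\Theta_1\tilde N_1^\dagger\tilde N_1$. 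Finally, since $\Theta_1>0$ admits a factorisation $\Theta_1 = T_1 T_1^\dagger$ with $T_1$ nonsingular, the condition $\Theta_1 = T_1 T_1^\dagger > 0$ in (\ref{annihilFGHK}) is met, so all of Definition~\ref{D3} is satisfied.

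I expect no serious obstacle, since this is the passive special case of Theorem~\ref{T1}: because the annihilation-only Hamiltonian matrix $\tilde M_1$ is an arbitrary Hermitian matrix with no further block structure to enforce, the concluding Lemma~\ref{L1} verification step required in Theorem~\ref{T1} is absent here. The only points needing care are the identity $SS^\dagger = I$ for square $S$ (used to simplify $\tilde G\tilde G^\dagger$) and the Hermitian-ness of the constructed $\tilde M_1$.
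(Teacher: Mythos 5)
The paper itself offers no proof of Theorem~\ref{T4}: it is stated as a result imported from \cite{MaP1a,MaP3,PET09Aa}, so there is no in-paper argument to compare yours against. Judged on its own terms, your proof is correct, and it is exactly the natural specialization of the paper's proof of Theorem~\ref{T1} to the annihilation-operator case. The forward direction by substitution works as you say, and the one genuine subtlety is the one you flag: $\tilde K = S$ is square (it maps the $m$ input fields to the $m$ output fields), so $S^\dagger S = I$ does give $SS^\dagger = I$ and hence $\tilde G\tilde G^\dagger = \Theta_1\tilde N_1^\dagger\tilde N_1\Theta_1$. In the converse, your choices $S = \tilde K$, $\tilde N_1 = \tilde H$, and $\tilde M_1 = \frac{i}{2}\left(\Theta_1^{-1}\tilde F - \tilde F^\dagger\Theta_1^{-1}\right)$ (Hermitian by inspection) mirror the construction in the converse of Theorem~\ref{T1} with $J$ replaced by $I$ and $\Theta$ by $\Theta_1$, and the algebra recovering $\tilde F = -i\Theta_1\tilde M_1 - \frac{1}{2}\Theta_1\tilde N_1^\dagger\tilde N_1$ goes through verbatim. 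Two details you handle that a careless specialization would miss: first, Definition~\ref{D3} together with (\ref{annihilFGHK}) requires exhibiting $\Theta_1 = T_1T_1^\dagger$ with $T_1$ nonsingular, which you supply via a square-root factorization of $\Theta_1$ --- this is precisely where the positive definiteness hypothesis (absent in Theorem~\ref{T1}) is consumed; second, your observation that the concluding Lemma~\ref{L1} block-structure verification of Theorem~\ref{T1} is vacuous here is accurate, since in the annihilation-only setting $\tilde M_1$ need only be Hermitian, with no doubled-up structure to enforce.
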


In the case of QSDEs of the form (\ref{qsde4}), the issue of physical
realizability is determined by the lossless bounded real property of
the corresponding transfer function matrix 
\begin{equation}
\label{annihTF}
\Gamma(s) = \tilde H(sI-\tilde F)^{-1}\tilde G+\tilde K.
\end{equation}
\begin{definition} (See also \cite{AV73}.)
\label{D4}
The transfer function matrix (\ref{annihTF}) corresponding to the
QSDEs  (\ref{qsde4})  is said to be \emph{lossless bounded real} if
the following conditions hold: 
\begin{enumerate}\item[i)] $\tilde F$ is a Hurwitz matrix; i.e., all of its
  eigenvalues have strictly negative real parts; \item[ii)] 
\[
\Gamma(i\omega)^\dagger \Gamma(i\omega)=I
\] for all $ \omega \in
\mathbb{R}.$ 
\end{enumerate}
\end{definition}

\begin{definition}(See also, \cite{MaP1a,MaP3,PET09Aa}.)
\label{D5}
The QSDEs (\ref{qsde4}) are said to define a \emph{minimal} realization of
the transfer function matrix (\ref{annihTF}) if the
following conditions hold:
\begin{enumerate}\item[i)] {\em Controllability}; 
$$\mbox{rank}\left[\begin{array}{ccccc} \tilde G & \tilde F \tilde G& 
\tilde F^2 \tilde G & \ldots & \tilde F^{n-1} \tilde
G \end{array}\right] = n;$$
\item[ii)] {\em Observability};
\[
\mbox{rank}\left[\begin{array}{c} \tilde H \\\tilde H\tilde F\\\tilde
    H\tilde F^2\\ \vdots \\ \tilde H \tilde F^{n-1}\end{array}\right]=n.
\]
\end{enumerate}
\end{definition}

The following theorem, which is a complex version of the standard
lossless bounded real lemma, gives a state space characterization of
the lossless bounded real property.

\begin{theorem}(Complex Lossless Bounded Real Lemma; e.g., see
  \cite{AV73,MaP1a,MaP3}). 
\label{T5}
Suppose the QSDEs (\ref{qsde4}) define  a minimal realization of
the transfer function matrix (\ref{annihTF}). Then the  transfer function
(\ref{annihTF}) is lossless bounded real if and only if there exists a
Hermitian matrix $X>0$ such that 
\begin{eqnarray}
\label{losslessboudedreal}
X\tilde F+\tilde F^\dagger X+\tilde H^\dagger \tilde H &=& 0;\nonumber\\
\tilde H^\dagger \tilde K &=& -X\tilde G;\nonumber \\
\tilde K^\dagger \tilde K &=& I.
\end{eqnarray}
\end{theorem}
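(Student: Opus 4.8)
The plan is to prove Theorem~\ref{T5} as a complex (Hermitian) version of the classical lossless bounded real lemma, so I would follow the structure of the real proof in \cite{AV73}, replacing transposes by adjoints and symmetric matrices by Hermitian ones throughout, and verify both implications. Both directions rest on a single resolvent identity. Writing $P(s) = (sI - \tilde F)^{-1}$ and $Q(s) = (-sI - \tilde F^\dagger)^{-1}$, the relations $\tilde F P = sP - I$ and $\tilde F^\dagger Q = -sQ - I$ (using that $\tilde F^\dagger$ commutes with $Q$) give, after a short computation,
\[
Q\left(X\tilde F + \tilde F^\dagger X\right)P = -QX - XP
\]
for any matrix $X$. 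This telescoping identity is the engine of both directions.

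For sufficiency, suppose a Hermitian $X > 0$ satisfies (\ref{losslessboudedreal}). I would first show that $\tilde F$ is Hurwitz: if $\tilde F v = \lambda v$ with $v \neq 0$, then $v^\dagger(X\tilde F + \tilde F^\dagger X)v = 2\Re(\lambda)\,v^\dagger X v = -\norm{\tilde H v}^2 \le 0$, so $\Re(\lambda) \le 0$; equality would force $\tilde H v = 0$, contradicting observability of $(\tilde F, \tilde H)$ (guaranteed by minimality, Definition~\ref{D5}) via the PBH test. For the all-pass condition I would expand $\Gamma^\sim(s)\Gamma(s)$ from (\ref{annihTF}), substitute $\tilde K^\dagger \tilde K = I$, $\tilde H^\dagger \tilde K = -X\tilde G$ (and its adjoint $\tilde K^\dagger\tilde H = -\tilde G^\dagger X$), and $\tilde H^\dagger \tilde H = -(X\tilde F + \tilde F^\dagger X)$, then apply the telescoping identity to the quadratic term. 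The cross terms cancel and one obtains $\Gamma^\sim(s)\Gamma(s) = I$ identically; restricting to $s = i\omega$ gives $\Gamma(i\omega)^\dagger\Gamma(i\omega) = I$, which is condition (ii) of Definition~\ref{D4}.

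For necessity, assume the transfer function is lossless bounded real and the realization minimal. Letting $\omega \to \infty$ in $\Gamma(i\omega)^\dagger\Gamma(i\omega) = I$ and using $\Gamma(\infty) = \tilde K$ yields $\tilde K^\dagger \tilde K = I$, the third equation. Since $\tilde F$ is Hurwitz and $\tilde H^\dagger \tilde H \ge 0$, I would define $X$ as the unique Hermitian solution of $X\tilde F + \tilde F^\dagger X + \tilde H^\dagger \tilde H = 0$, namely $X = \int_0^\infty e^{\tilde F^\dagger t}\tilde H^\dagger \tilde H\, e^{\tilde F t}\,dt$; observability makes this $X > 0$, which delivers the first equation and the positivity claim at once. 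Because $\Gamma(i\omega)^\dagger\Gamma(i\omega) = I$ on the imaginary axis and both sides are rational, $\Gamma^\sim(s)\Gamma(s) = I$ holds identically; substituting $\tilde K^\dagger \tilde K = I$ and the first equation, then telescoping exactly as before, collapses this to
\[
Z^\dagger P \tilde G + \tilde G^\dagger Q Z = 0, \qquad Z \defeq X\tilde G + \tilde H^\dagger \tilde K,
\]
for all $s$.

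The crux, and the step I expect to be the main obstacle, is concluding $Z = 0$ from this last identity. Here I would use that $\tilde G^\dagger Q Z$ is strictly proper with poles only in $\mathbb{C}_+$ (the spectrum of $-\tilde F^\dagger$) while $Z^\dagger P\tilde G$ is strictly proper with poles only in the open left half-plane (the spectrum of $\tilde F$); since $\tilde F$ is Hurwitz these pole sets are disjoint, so each term must vanish separately. Expanding $Z^\dagger(sI - \tilde F)^{-1}\tilde G = \sum_{k\ge 0} Z^\dagger \tilde F^k \tilde G\, s^{-(k+1)} = 0$ forces $Z^\dagger\left[\tilde G\ \ \tilde F\tilde G\ \cdots\ \tilde F^{n-1}\tilde G\right] = 0$, and controllability (minimality) then gives $Z = 0$, i.e.\ $\tilde H^\dagger \tilde K = -X\tilde G$, completing the three equations. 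The delicate points are the adjoint bookkeeping in the complex setting and the legitimacy of the stable/anti-stable separation, which relies essentially on the Hurwitz property supplied by Definition~\ref{D4}(i).
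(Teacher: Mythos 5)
Your proposal is correct: the telescoping resolvent identity, the PBH/observability argument for the Hurwitz property in the sufficiency direction, the Lyapunov-integral construction of $X>0$ in the necessity direction, and the stable/anti-stable pole-separation step that forces $Z = X\tilde G + \tilde H^\dagger \tilde K = 0$ via controllability are all sound in the complex (Hermitian) setting. Note that the paper itself gives no proof of Theorem~\ref{T5} --- it is quoted as a known result from \cite{AV73,MaP1a,MaP3} --- and your argument is essentially the standard Anderson--Vongpanitlerd proof with transposes replaced by adjoints, which is exactly the route those citations indicate.
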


Combining Theorems \ref{T4} and \ref{T5} leads to the following result
which provides a complete characterization of the physical
realizability property for minimal QSDEs of the form (\ref{qsde4}). 

\begin{theorem}(See \cite{MaP1a,MaP3,PET09Aa}.)
\label{T6} 
Suppose the QSDEs (\ref{qsde4}) define  a minimal realization of
the transfer function matrix (\ref{annihTF}). 
Then, the  QSDEs (\ref{qsde4}) are physically realizable if and only if
the transfer function matrix (\ref{annihTF}) is lossless bounded real. 
\end{theorem}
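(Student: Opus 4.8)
The plan is to obtain Theorem~\ref{T6} as a direct consequence of Theorems~\ref{T4} and~\ref{T5}, the bridge between the two being the reciprocal change of certificate matrix $X = \Theta_1^{-1}$. Both theorems reduce their respective properties to the solvability of a set of matrix equations in a positive-definite Hermitian unknown ($\Theta_1$ for physical realizability in (\ref{physreal2}), and $X$ for the lossless bounded real property in (\ref{losslessboudedreal})), so the entire task is to show that these two systems of equations are equivalent under the correspondence $X \leftrightarrow \Theta_1^{-1}$.

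First I would assume physical realizability. By Theorem~\ref{T4} there is a Hermitian $\Theta_1 > 0$ satisfying the three equations in (\ref{physreal2}); set $X = \Theta_1^{-1}$, which is again Hermitian and positive definite. The coupling equation $\tilde G = -\Theta_1 \tilde H^\dagger \tilde K$ immediately gives $X\tilde G = -\tilde H^\dagger \tilde K$, i.e.\ the second equation of (\ref{losslessboudedreal}). Congruence of the first equation in (\ref{physreal2}) by $X$ on both sides turns $\tilde F \Theta_1 + \Theta_1 \tilde F^\dagger + \tilde G \tilde G^\dagger = 0$ into $X\tilde F + \tilde F^\dagger X + X \tilde G \tilde G^\dagger X = 0$, and the quadratic term rewrites as $X\tilde G \tilde G^\dagger X = \tilde H^\dagger \tilde K \tilde K^\dagger \tilde H$ using $X\tilde G = -\tilde H^\dagger \tilde K$. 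Since $\tilde K$ is square and satisfies $\tilde K^\dagger \tilde K = I$, it is unitary, so $\tilde K \tilde K^\dagger = I$ and the term collapses to $\tilde H^\dagger \tilde H$, yielding the first equation of (\ref{losslessboudedreal}); the third equation carries over verbatim. With a positive-definite $X$ solving (\ref{losslessboudedreal}) in hand, the minimality hypothesis lets me invoke Theorem~\ref{T5} to conclude that the transfer function (\ref{annihTF}) is lossless bounded real.

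The converse runs the same computation backwards. Given that (\ref{annihTF}) is lossless bounded real and that (\ref{qsde4}) is minimal, Theorem~\ref{T5} supplies a Hermitian $X > 0$ solving (\ref{losslessboudedreal}); I would set $\Theta_1 = X^{-1} > 0$ and reverse each of the three steps above --- reading the coupling equation as $\tilde G = -\Theta_1 \tilde H^\dagger \tilde K$, congruence-transforming the Lyapunov equation by $\Theta_1$, and using unitarity of $\tilde K$ to replace $\tilde H^\dagger \tilde H$ with $\Theta_1^{-1}\tilde G \tilde G^\dagger \Theta_1^{-1}$ --- to recover exactly the three equations of (\ref{physreal2}). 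Theorem~\ref{T4} then gives physical realizability.

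The computations here are routine congruence manipulations, so I do not expect a genuine obstacle; the only points requiring care are bookkeeping ones. The substantive observation is that the two coupling conditions and the two Lyapunov equations are interchanged precisely by $X = \Theta_1^{-1}$, and that matching the quadratic terms $\tilde G \tilde G^\dagger$ and $\tilde H^\dagger \tilde H$ forces the use of $\tilde K \tilde K^\dagger = I$, which must first be deduced from $\tilde K^\dagger \tilde K = I$ via squareness of $\tilde K$. I would also flag explicitly that minimality is used only to license the application of Theorem~\ref{T5} (in both directions), whereas Theorem~\ref{T4} needs no such hypothesis.
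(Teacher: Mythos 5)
Your proposal is correct and follows exactly the paper's route: the paper obtains Theorem~\ref{T6} simply by ``combining Theorems~\ref{T4} and~\ref{T5}'', and your reciprocal bridge $X = \Theta_1^{-1}$ is precisely the computation that combination requires. The details you supply --- congruence of the Lyapunov equations by the inverse certificate, and using squareness of $\tilde K$ to pass from $\tilde K^\dagger \tilde K = I$ to $\tilde K \tilde K^\dagger = I$ when matching the quadratic terms --- are correct and are exactly what the paper leaves implicit.
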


The following theorem from \cite{MaP1a,MaP4}, is useful in
synthesizing coherent quantum controllers using state space methods.

\begin{theorem}(See \cite{MaP1a,MaP4}.)
\label{T7}
Suppose the matrices $F, G_1, H_1$ define a minimal realization of the
transfer function matrix
\[
\Gamma_1(s) = H_1(sI-F)^{-1}G_1. 
\]
Then,
there exists matrices $G_2$ and $H_2$ such that the following QSDEs of the form
(\ref{qsde4}) 
\begin{eqnarray}
\label{physreacont}
d\tilde a(t) &=& F \tilde a(t)dt \nonumber \\
&&+ \left[\begin{array}{cc} 
   G_2 & G_1 \\
\end{array} \right]\left[\begin{array}{c}
   d\mathcal{A}_1(t)  \\
   d\mathcal{A}_2(t)
\end{array} \right];\nonumber \\
\left[\begin{array}{c}
    d\mathcal{A}_1^{out}(t)  \\
   d\mathcal{A}_2^{out}(t)
\end{array} \right] &=& \left[\begin{array}{c}
   H_1  \\
   H_2
\end{array} \right]\tilde a(t) dt \nonumber \\
&&+ \left[\begin{array}{cc}
   I & 0\\
   0 & I
\end{array} \right]\left[\begin{array}{c}
   d\mathcal{A}_1(t)  \\
   d\mathcal{A}_2(t)
\end{array}\right]\nonumber \\
\end{eqnarray}
are physically realizable if and only if $F$ is Hurwitz and
\begin{equation} 
\label{physreabound}
\left\| {H_1 \left( {sI - F} \right)^{ - 1} G_1 } \right\|_\infty
\leq 1. 
\end{equation}
\end{theorem}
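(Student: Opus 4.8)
The plan is to route everything through the lossless bounded real lemma (Theorem \ref{T5}) and its equivalence with physical realizability (Theorem \ref{T6}). The first step is a structural observation that makes these applicable: for \emph{any} choice of $G_2$ and $H_2$, the extended realization $\tilde F = F$, $\tilde G = [\,G_2\ G_1\,]$, $\tilde H = [H_1^T\ H_2^T]^T$, $\tilde K = I$ is automatically minimal whenever $(F,G_1,H_1)$ is, since appending extra columns to $G_1$ cannot destroy controllability of $(F,G_1)$ and appending extra rows to $H_1$ cannot destroy observability of $(F,H_1)$. This inherited minimality is exactly the hypothesis needed to invoke Theorems \ref{T5} and \ref{T6} on the augmented system (\ref{physreacont}).

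For the ``if'' direction, suppose $F$ is Hurwitz and $\norm{\Gamma_1}_\infty \le 1$. I would invoke the (complex) bounded real lemma to produce a Hermitian $X>0$ solving the algebraic Riccati equation
\[
XF + F^\dagger X + H_1^\dagger H_1 + X G_1 G_1^\dagger X = 0,
\]
and then set $G_2 = -X^{-1}H_1^\dagger$ and $H_2 = -G_1^\dagger X$. A direct substitution then shows that this $X$, together with the extended data above, verifies the three lossless bounded real conditions (\ref{losslessboudedreal}) of Theorem \ref{T5}: the relation $\tilde H^\dagger \tilde K = -X\tilde G$ is precisely the defining choice of $G_2,H_2$; the identity $\tilde K^\dagger \tilde K = I$ is trivial since $\tilde K = I$; and the Lyapunov-type equation collapses onto the Riccati equation above once one inserts $\tilde H^\dagger \tilde H = H_1^\dagger H_1 + H_2^\dagger H_2$ and $H_2^\dagger H_2 = X G_1 G_1^\dagger X$. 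By minimality and Theorem \ref{T5}, the extended transfer function (\ref{annihTF}) is then lossless bounded real, and Theorem \ref{T6} converts this into physical realizability of (\ref{physreacont}).

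For the ``only if'' direction, suppose $G_2,H_2$ make (\ref{physreacont}) physically realizable. Since the extended realization is minimal, Theorem \ref{T6} gives that its transfer function $\Gamma(s)$ is lossless bounded real; in particular $F$ is Hurwitz by part (i) of Definition \ref{D4}, and $\Gamma(i\omega)^\dagger \Gamma(i\omega) = I$ for all real $\omega$. Reading off the $(2,2)$ block of this identity gives $\Gamma_{12}(i\omega)^\dagger \Gamma_{12}(i\omega) + \Gamma_{22}(i\omega)^\dagger \Gamma_{22}(i\omega) = I$, and since the $(1,2)$ block of the extended transfer function is exactly $\Gamma_{12} = H_1(sI-F)^{-1}G_1 = \Gamma_1$ while $\Gamma_{22}^\dagger \Gamma_{22} \ge 0$, this forces $\Gamma_1(i\omega)^\dagger \Gamma_1(i\omega) \le I$ for all $\omega$, i.e.\ $\norm{\Gamma_1}_\infty \le 1$.

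The step I expect to be the main obstacle is the existence claim in the ``if'' direction: obtaining a \emph{positive-definite} solution $X$ of the bounded real Riccati \emph{equation} (equality, not merely the inequality delivered by the KYP/LMI form) in the boundary case $\norm{\Gamma_1}_\infty \le 1$. The strict case $\norm{\Gamma_1}_\infty < 1$ is routine, since the associated Hamiltonian matrix then has no imaginary-axis eigenvalues and a stabilizing solution exists; the non-strict case requires passing to the maximal (anti-stabilizing) solution, and positive definiteness of $X$ must be extracted from observability of $(F,H_1)$ together with $F$ being Hurwitz. I would isolate this as the single place where the standard complex bounded real lemma is cited, and dispatch the remaining algebra by the direct verification described above.
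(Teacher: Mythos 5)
Your proof is correct and follows exactly the route this paper sets up for Theorem \ref{T7}: the survey itself states the theorem without proof (deferring to \cite{MaP1a,MaP4}), but the cited argument is precisely your construction — minimality of the augmented realization inherited from $(F,G_1,H_1)$, the lossless embedding $G_2 = -X^{-1}H_1^\dagger$, $H_2 = -G_1^\dagger X$ built from a Hermitian solution $X>0$ of the bounded real Riccati equation, and Theorems \ref{T5} and \ref{T6} to pass between the Riccati conditions, losslessness, and physical realizability, with the converse read off from the $(2,2)$ block of $\Gamma(i\omega)^\dagger\Gamma(i\omega)=I$. The single point you flag — existence of $X>0$ solving the Riccati \emph{equation} in the non-strict case $\left\|\Gamma_1\right\|_\infty \le 1$ — is indeed the only delicate step, and it is covered by the non-strict complex bounded real lemma of \cite{AV73,MaP3}: under controllability of $(F,G_1)$ the extremal solutions of the associated LMI satisfy the equation with equality, and observability of $(F,H_1)$ together with $F$ Hurwitz forces any such solution to dominate the observability Gramian, hence $X>0$.
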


\subsection{Physical Realizability for Position and momentum operator linear quantum systems}
It is often convenient to consider the physical realizability of real quantum linear systems of the form  (\ref{qsde5}). This can be achieved by applying the transformations (\ref{posmom2}) and the equations (\ref{ABCD1}), (\ref{Jtilde}), (\ref{Jtilde1}), (\ref{DJDT}), (\ref{DDT})   to obtain the following corollary of Theorem \ref{T1}.

\begin{corollary} (See also, \cite{JNP1}, \cite{NJP1}.)
\label{C1}
The QSDEs (\ref{qsde5}) are physically realizable if
and only if there exists a real matrix $\tilde \Theta =-\tilde \Theta^T$
 such that
\begin{eqnarray}
\label{physrealR}
&&A \tilde \Theta 
+ \tilde \Theta  A^T + B\tilde J B^T = 0;
\nonumber \\
&&B =  \tilde \Theta C^T \tilde J D; \nonumber \\
&&D \tilde  J D^T =\tilde J; \nonumber \\
&&DD^T =I.
\end{eqnarray}
\end{corollary}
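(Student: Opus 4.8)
The plan is to obtain Corollary \ref{C1} directly from Theorem \ref{T1} by pushing the physical realizability equations (\ref{physreal1}) through the invertible change of coordinates $\Phi$. The transformation (\ref{posmom2}) is a bijection between the complex QSDEs (\ref{qsde3}) and the real QSDEs (\ref{qsde5}), under which $A = \Phi\tilde F\Phi^{-1}$, $B = \Phi\tilde G\Phi^{-1}$, $C = \Phi\tilde H\Phi^{-1}$, and $D = \Phi\tilde K\Phi^{-1}$ as in (\ref{ABCD}). Since this change of variables is reversible, (\ref{qsde5}) is physically realizable precisely when (\ref{qsde3}) is, so by Theorem \ref{T1} it suffices to show that the existence of a Hermitian $\Theta$ of the form (\ref{Theta}) solving (\ref{physreal1}) is equivalent to the existence of a real skew-symmetric $\tilde\Theta$ solving (\ref{physrealR}). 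The candidate correspondence is $\tilde\Theta = -\frac{i}{2}\Phi\Theta\Phi^\dagger$, i.e. $\Xi = \Phi\Theta\Phi^\dagger = 2i\tilde\Theta$, exactly as in the definition preceding (\ref{Jtilde}).

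First I would verify the structural dictionary between $\Theta$ and $\tilde\Theta$. Using $\Phi\Phi^\dagger = 2I$ from (\ref{PhiPhidagger}) one checks $\Phi^\# = \Phi\Sigma$ and $(\Phi^\dagger)^\# = \Sigma\Phi^\dagger$. A matrix of the form (\ref{Theta}), i.e. $\Theta = TJT^\dagger$, satisfies $\Theta\Sigma = -\Sigma\Theta^\#$ (equivalently $\Sigma\Theta^\#\Sigma = -\Theta$); combining this with $\Theta = \Theta^\dagger$ and the two conjugation identities for $\Phi$ shows that $\tilde\Theta^\# = \tilde\Theta$ (real) and $\tilde\Theta^\dagger = -\tilde\Theta$, whence $\tilde\Theta^T = -\tilde\Theta$ (skew-symmetric). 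Conversely, given a real skew-symmetric $\tilde\Theta$, setting $\Theta = \Phi^{-1}(2i\tilde\Theta)\Phi^{-\dagger}$ and reversing the same computation recovers a Hermitian $\Theta$ with $\Sigma\Theta^\#\Sigma = -\Theta$, i.e. of the form (\ref{Theta}). This gives a bijection between the admissible $\Theta$ and the admissible $\tilde\Theta$.

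Next I would transform the four equations of (\ref{physreal1}) one at a time. Conjugating the first equation by $\Phi$ on the left and $\Phi^\dagger$ on the right and inserting $\Phi^{-1}\Phi$ factors turns $\tilde F\Theta$, $\Theta\tilde F^\dagger$, and $\tilde GJ\tilde G^\dagger$ into $2iA\tilde\Theta$, $2i\tilde\Theta A^T$, and $2iB\tilde J B^T$ respectively, using $\Phi J\Phi^\dagger = 2i\tilde J$ from (\ref{Jtilde}) and the reality of $A,B$; dividing by $2i$ yields the first equation of (\ref{physrealR}). For the second equation, applying $\Phi(\cdot)\Phi^{-1}$ to $\tilde G = -\Theta\tilde H^\dagger J\tilde K$ and inserting identities gives $B = -\Xi C^\dagger(\Phi^{-\dagger}J\Phi^{-1})D$; substituting $\Xi = 2i\tilde\Theta$, $C^\dagger = C^T$, and $\Phi^{-\dagger}J\Phi^{-1} = \frac{i}{2}\tilde J$ from (\ref{Jtilde1}) collapses the scalar $-(2i)(\frac{i}{2}) = 1$ and produces $B = \tilde\Theta C^T\tilde J D$. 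Finally, the conditions $\tilde K J\tilde K^\dagger = J$ and $\tilde K\tilde K^\dagger = I$ are equivalent to $D\tilde J D^T = \tilde J$ and $DD^T = I$; this is exactly the content of (\ref{DJDT}) and (\ref{DDT}) together with the converse argument given immediately after (\ref{DJDT}). All steps are reversible, so the complex system (\ref{physreal1}) holds for some admissible $\Theta$ if and only if the real system (\ref{physrealR}) holds for the corresponding admissible $\tilde\Theta$.

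The main obstacle I anticipate is the structural correspondence of the second paragraph rather than the equation-by-equation substitutions. The signs are delicate: one must use that $\Theta = TJT^\dagger$ obeys $\Theta\Sigma = -\Sigma\Theta^\#$ (note the minus sign, which is what makes $\tilde\Theta$ real rather than purely imaginary), and that conjugation by $\Phi$ interchanges the roles of the Hermitian and transpose symmetries, converting ``Hermitian of the doubled-up form'' into ``real and skew-symmetric.'' Once this dictionary is pinned down with the correct signs, the remaining manipulations are routine applications of the identities (\ref{PhiPhidagger}), (\ref{invPhi2}), (\ref{Jtilde}), (\ref{Jtilde1}), (\ref{DDT}) and (\ref{DJDT}) already recorded in Section \ref{sec:systems}.
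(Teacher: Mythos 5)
Your proposal is correct and takes essentially the same route as the paper, which obtains Corollary \ref{C1} from Theorem \ref{T1} exactly by pushing (\ref{physreal1}) through the transformation (\ref{posmom2}) using the identities (\ref{ABCD1}), (\ref{Jtilde}), (\ref{Jtilde1}), (\ref{DJDT}), (\ref{DDT}); your dictionary $\Xi = \Phi\Theta\Phi^\dagger = 2i\tilde\Theta$ and the equation-by-equation substitutions (including the scalar collapse $-(2i)(\tfrac{i}{2})=1$) are precisely the intended argument. The one caveat, which your ``i.e.\ of the form (\ref{Theta})'' glosses over but which is equally implicit in the paper's own statement, is that recovering a $\Theta$ of the form (\ref{Theta}) from a real skew-symmetric $\tilde\Theta$ additionally requires $\tilde\Theta$ to be nonsingular, since (\ref{Theta}) means $\Theta = TJT^\dagger$ with $T$ invertible.
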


\begin{remark}
\label{R2}
For real QSDEs of the form (\ref{qsde5}) with corresponding transfer
function 
\[
\Upsilon(s) = C(sI-A)^{-1}B+D
\]
It is straightforward using equations (\ref{posmom2}) to verify that this
transfer function is related to the transfer function (\ref{TF}) of the
corresponding complex QSDEs (\ref{qsde3}) according to the relation
\begin{equation}
\label{TFrelation}
\Upsilon(s) = \Phi \Gamma(s) \Phi^{-1}.
\end{equation}
Now if the real QSDEs (\ref{qsde5}) are physically realizable, it
follows that the corresponding complex QSDEs  (\ref{qsde3}), (\ref{tildeFGHK1}) are physically
realizable. Hence, using Theorem \ref{T3}, it follows  that the corresponding transfer function matrix  (\ref{TF}) is 
 dual $ (J,J)$-unitary; i.e., 
\[
\Gamma(s)J\Gamma^\sim(s) = J
\]
for all $s \in \mathbb{C}_+$. Therefore, it follows from
(\ref{TFrelation}) and (\ref{Jtilde}) that 
\[
\Upsilon(s)\tilde J \Upsilon^\sim(s) = \tilde J
\]
for all $s \in \mathbb{C}_+$. Also, as in the discussion at the end of Section II.C, the conditions (\ref{KJKdagger1}), (\ref{KKdagger1}) are equivalent to the conditions
\begin{eqnarray*}
&&D \tilde  J D^T =\tilde J; \nonumber \\
&&DD^T =I.
\end{eqnarray*}
\end{remark}
\section{Coherent Quantum $H^\infty$ Control}
\label{sec:Hinf}
In this section, we formulate a coherent quantum control problem in
which a linear quantum system is controlled by a feedback controller
which is itself a linear quantum system. The fact that the controller
is to be a quantum system means that any controller synthesis method
needs to produce controllers which are physically realizable.  The
 problem we consider is the quantum $H^\infty$ control problem in
which it is desired to design a coherent controller such that the
resulting closed loop quantum system is stable and attenuates
specified disturbances acting on the system; see
\cite{JNP1,MaP1a,MaP4}. In the standard quantum $H^\infty$ control
problem such as considered in \cite{JNP1,MaP1a,MaP4}, the quantum
noises are averaged out and only the external disturbance is
considered.  
\subsection{Coherent $H^\infty$ control of general quantum linear systems}
\label{subsec:Hinf_gen}
In this subsection, we formulate the coherent quantum $H^\infty$ control
problem for a general class of quantum systems of the form
(\ref{qsde3}), (\ref{tildeFGHK1}). 

We consider quantum \emph{plants} described by linear complex quantum stochastic
models of the
following form defined in an analogous way to the QSDEs
(\ref{qsde3}), (\ref{tildeFGHK1}):

\begin{eqnarray}
\label{plant1}
\left[\begin{array}{l} d\tilde a(t)\\d\tilde a(t)^\#\end{array}\right]
 &=& F \left[\begin{array}{l} \tilde a(t)\\\tilde
     a(t)^\#\end{array}\right]dt\nonumber \\
&&+\left[
{\begin{array}{*{20}c}
   {G_{0 } } & {G_{1 } } & {G_{2}}  \\
\end{array}} \right]\left[ {\begin{array}{*{20}c}
   {dv\left( t \right)}  \\  {dw \left( t \right)} \\ {du\left( t \right)}
\end{array}} \right];  \nonumber\\
dz \left( t \right) &=& H_{1 }\left[\begin{array}{l} \tilde a(t)\\\tilde
     a(t)^\#\end{array}\right] dt 
+ K_{{12} } du \left( t \right); \nonumber\\
 dy \left( t \right) &=& H_{2 } 
\left[\begin{array}{l} \tilde a(t)\\\tilde
     a(t)^\#\end{array}\right] dt \nonumber \\
&&+ \left[ {\begin{array}{*{20}c}
   {K _{{20} } } & {K _{{21} } } & {0 }  \\
\end{array}} \right] \left[ {\begin{array}{*{20}c}
  {dv \left( t \right)}  \\ {dw \left( t \right)}\\ {du \left( t \right)}
\end{array}} \right]\nonumber \\
\end{eqnarray}
where all of the matrices in these QSDEs have a form as in
(\ref{tildeFGHK1}). 
Here, the input 
\[
dw(t)=\left[\begin{array}{l} \beta_w (t)dt+d\mathcal{A}(t)
\\ \beta_w^\# (t)dt+ d\mathcal{A}(t)^{\#} \end{array}\right]
\]
represents a disturbance signal where $\beta_{w}(t)$
is an adapted process; see \cite{JNP1,MaP1a,PAR92}. The signal
$u(t)$ is a control input of the form
\[
du(t)=\left[\begin{array}{l} \beta_u (t)dt+d\mathcal{B}(t)
\\ \beta_u^\# (t)dt+ d\mathcal{B}(t)^{\#} \end{array}\right]
\] 
where  $\beta_{u}(t)$ is an adapted process. The
quantity 
\[
dv(t) = \left[\begin{array}{l} d\mathcal{C}(t)
\\ d\mathcal{C}(t)^{\#} \end{array}\right]
\] 
represents
any additional quantum noise in the plant. The quantities 
 $\left[\begin{array}{l} d\mathcal{A}(t)
\\ d\mathcal{A}(t)^{\#} \end{array}\right]$,
$\left[\begin{array}{l} d\mathcal{B}(t)
\\ d\mathcal{B}(t)^{\#} \end{array}\right]$ and $\left[\begin{array}{l} d\mathcal{C}(t)
\\ d\mathcal{C}(t)^{\#} \end{array}\right]$ are quantum noises of the
form described in Section \ref{sec:systems}.

In the coherent quantum $H^\infty$ control problem, we consider
controllers which are described by QSDEs of the form 
(\ref{qsde3}), (\ref{tildeFGHK1}) as follows:
\begin{eqnarray}
\label{controller1}
\left[\begin{array}{l} d\hat a(t)\\d\hat a(t)^\#\end{array}\right]
 &=& F_c \left[\begin{array}{l} \hat a(t)\\\hat
     a(t)^\#\end{array}\right]dt\nonumber \\
&&+ \left[\begin{array}{*{20}c}
{G_{c_0}} & G _{c_1 } & {G_{c } }
\end{array} \right] 
\left[\begin{array}{*{20}c}
   {dw_{c_0 } }  \\
   {dw_{c_1 } }  \\
   {dy}
\end{array}\right]
  \nonumber\\
\left[\begin{array}{c}
du(t) \\
du_0(t) \\
du_1(t)
\end{array}\right]
 &=& 
\left[\begin{array}{c}
H_c \\
H_{c0} \\
H_{c1}
\end{array}\right]
\left[\begin{array}{l} \hat a(t)\\\hat
     a(t)^\#\end{array}\right] dt  \nonumber \\
&&+\left[\begin{array}{ccc}
K_c & 0 & 0 \\
0 & K_{c0} & 0 \\
0 & 0 & K_{c1}
\end{array}\right]
\left[\begin{array}{*{20}c}
   {dw_{c_0 } }  \\
   {dw_{c_1 } }  \\
   {dy}
\end{array}\right]\nonumber \\
\end{eqnarray}
where all of the matrices in these QSDEs have a form as in
(\ref{tildeFGHK1}). Here the quantities 
 \[
dw_{c_0 } = \left[\begin{array}{l} d\mathcal{A}_c(t)
\\ d\mathcal{A}_c(t)^{\#} \end{array}\right],~~
dw_{c_1 }=\left[\begin{array}{l} d\mathcal{B}_c(t)
\\ d\mathcal{B}_c(t)^{\#} \end{array}\right]
\]
  are controller quantum
noises of the form described in Section \ref{sec:systems}. Also, the
ouputs $du_0$ and $du_1$ are unused outputs of the controller which
have been included so that the controller can satisfy the definition
of physical realizability given in Definition \ref{D1}.

Corresponding to the plant (\ref{plant1}) and (\ref{controller1}), we
form the closed loop quantum system by identifying the output of the
plant $dy$ with the input to the controller $dy$, and identifying the
output of the controller $du$ with the input to the plant $du$. This
leads to the following closed-loop  QSDEs:
\begin{eqnarray} 
\label{closed1}
  d\eta  \left( t \right) &=& \left[ {\begin{array}{*{20}c}
    {F } & {G _{2 } H _{c } }  \\
    {G _{c } H _{2 } } & {F _{c } }  \\
 \end{array}} \right]\eta \left( t \right)dt \nonumber \\
&&+\left[ {\begin{array}{*{20}c}
    {G _{0 } } & {G _{2 }} & 0  \\
    {G _{c } K _{{20} } } & {G_{c_0}} & {G_{c_1}}\\
 \end{array}} \right]\left[ {\begin{array}{*{20}c}
    {dv \left( t \right)}  \\
    {dw_{c_0 } \left( t \right)}  \\
    {dw_{c_1}\left( t \right)}  \\
 \end{array}} \right] \nonumber \\
&&+ \left[ {\begin{array}{*{20}c}
    {G _{1 } }  \\
    {G _{c } K _{{21} } }  \\
 \end{array}} \right]dw \left( t \right);\nonumber \\
   dz \left( t \right) &=& \left[ {\begin{array}{*{20}c}
    {H _{1 } } & {K _{{12} } H _{c } }  \\
 \end{array}} \right]\eta  \left( t \right)dt \nonumber \\
&&+ \left[ {\begin{array}{*{20}c}
    0 & {K _{{12} } }& 0  \\
 \end{array}} \right]\left[ {\begin{array}{*{20}c}
   {dv \left( t \right)}  \\
    {dw_{c_0 } \left( t \right)}  \\
    {dw_{c_1}\left( t \right)}  \\
\end{array}} \right]
\end{eqnarray}
where
\[
\eta \left( t \right) = \left[\begin{array}{l}\tilde a(t)\\
\tilde a(t)^\#\\ \hat a(t)\\\hat
     a(t)^\#\end{array}\right].
\]

For a given quantum plant of the form (\ref{plant1}), the coherent quantum
$H^\infty$ control problem involves finding a physically realizable
quantum controller (\ref{controller1}) such that the resulting closed
loop system (\ref{closed1}) is such that the following conditions are
satisfied:
\begin{enumerate}[(i)]
\item
The matrix
\begin{equation}
\label{cl_hurwitz}
F_{cl} = \left[ {\begin{array}{*{20}c}
    {F } & {G _{2 } H _{c } }  \\
    {G _{c } H _{2 } } & {F _{c } }  \\
 \end{array}} \right]
\end{equation}
is Hurwitz;
\item
 The closed loop transfer function 
\[
\Gamma_{cl}(s) = H_{cl}\left(sI-F_{cl}\right)^{-1}G_{cl}
\]
satisfies
\begin{equation}
\label{cl_hinf}
\|\Gamma_{cl}(s)\|_\infty < 1
\end{equation}
where
\[
H_{cl} = \left[ {\begin{array}{*{20}c}
    {H _{1 } } & {K _{{12} } H _{c } }  \\
 \end{array}} \right], 
~~ G_{cl} = \left[ {\begin{array}{*{20}c}
    {G _{1 } }  \\
    {G _{c } K _{{21} } }  \\
 \end{array}} \right].
\]
\end{enumerate}

\begin{remark}
\label{R3}
In the paper \cite{JNP1}, a version of the coherent quantum $H^\infty$
control problem is solved for linear quantum systems described by real
QSDEs which are similar to those in (\ref{qsde5}). In this case, the
problem is solved using a standard two Riccati equation approach such
as given in \cite{DGKF89,ZDG96}. A result is given in \cite{JNP1}
which shows that any $H^\infty$ controller which is synthesized using the two
Riccati equation approach can be made physically realizable by adding
suitable additional quantum noises. 
\end{remark}

\subsection{Coherent $H^\infty$ control of annihilator operator quantum linear systems}
\label{subsec:Hinf_annih}

In this subsection, we consider the special case of
coherent quantum $H^\infty$ control  for  annihilation operator
quantum linear systems of the 
form considered in Subsection \ref{subsec:annihilator} and present the
Riccati equation solution to this problem obtained in
\cite{MaP1a,MaP3}. The quantum $H^\infty$ control problem being
considered is the same as considered in Subsection
\ref{subsec:Hinf_gen} but we restrict attention to annihilation
operator plants of the form (\ref{qsde4}) as follows:
\begin{eqnarray}
\label{plant2}
d\tilde a(t)
 &=& F \tilde a(t)dt
%\nonumber \\&&
+\left[
{\begin{array}{*{20}c}
   {G_{0 } } & {G_{1 } } & {G_{2}}  \\
\end{array}} \right]\left[ {\begin{array}{*{20}c}
   {dv\left( t \right)}  \\  {dw \left( t \right)} \\ {du\left( t \right)}
\end{array}} \right];  \nonumber\\
dz \left( t \right) &=& H_{1 }\tilde a(t)dt 
+ K_{{12} } du \left( t \right); \nonumber\\
 dy \left( t \right) &=& H_{2 } 
\tilde a(t) dt 
%\nonumber \\&&
+ \left[ {\begin{array}{*{20}c}
   {K _{{20} } } & {K _{{21} } } & {0 }  \\
\end{array}} \right] \left[ {\begin{array}{*{20}c}
  {dv \left( t \right)}  \\ {dw \left( t \right)}\\ {du \left( t \right)}
\end{array}} \right].\nonumber \\
\end{eqnarray}
Also, we restrict attention to annihilation
operator controllers of the form (\ref{qsde4}) as follows:
\begin{eqnarray}
\label{controller2}
d\hat a(t)
 &=& F_c \hat a(t)dt\nonumber \\
&&+ \left[\begin{array}{*{20}c}
{G_{c_0}} & G _{c_1 } & {G_{c } }
\end{array} \right] 
\left[\begin{array}{*{20}c}
   {dw_{c_0 } }  \\
   {dw_{c_1 } }  \\
   {dy}
\end{array}\right];
  \nonumber\\
\left[\begin{array}{c}
du(t) \\
du_0(t) \\
du_1(t)
\end{array}\right]
 &=& 
\left[\begin{array}{c}
H_c \\
H_{c0} \\
H_{c1}
\end{array}\right]
\hat a(t) dt  \nonumber \\
&&+\left[\begin{array}{ccc}
K_c & 0 & 0 \\
0 & K_{c0} & 0 \\
0 & 0 & K_{c1}
\end{array}\right]
\left[\begin{array}{*{20}c}
   {dw_{c_0 } }  \\
   {dw_{c_1 } }  \\
   {dy}
\end{array}\right].\nonumber \\
\end{eqnarray}

The quantum plant (\ref{plant2}) is assumed
to  satisfy the following assumptions:
\begin{enumerate}
\item[i)] $K_{{12}}^\dagger K_{{12}}=E_{1}>0;$
\item[ii)] $K_{{21}}K_{{21}}^\dagger=E_{2}>0;$
\item[iii)] The matrix $
\left[ {\begin{array}{*{20}c}
   {F  - i\omega I_n } & {G _{2 } }  \\
   {H _{1 } } & {K _{{12} } }  \\
\end{array}} \right]$ is full rank for all $\omega \geq 0;$
\item[iv)] The matrix $
\left[ {\begin{array}{*{20}c}
   {F  - i\omega I_n } & {G _{1 } }  \\
   {H _{2 } } & {K _{{21} } }  \\
\end{array}} \right]$is full rank for all $\omega \geq 0$.
\end{enumerate}

The results will be stated in terms of the following pair of complex algebraic Riccati equations:
\begin{eqnarray}\label{riccati1}
&& \left( {F  - G _{2 } E_{1 }^{ - 1} K _{{12} }^\dagger H _{1 } }
\right)^\dagger X  + X \left( {F  - G _{2 } E_{1 }^{ - 1} K _{{12}
    }^\dagger H _{1 } } \right) \nonumber \\
&&+ X \left( {G _{1 } G _{1 }^\dagger  -  G _{2 } E_{1 }^{ - 1} G _{2
    }^\dagger } \right)X  \nonumber \\ 
&&+    H _{1 }^\dagger \left( {I - K
    _{{12} } E_{1 }^{ - 1} K _{{12} }^\dagger } \right)H _{1 }  = 0;
 \\ \label{riccati2}
&& \left( {F  - G _{1 } K _{{21} }^\dagger E_{2 }^{ - 1} H _{2 } }
\right) Y  + Y \left( {F  - G _{1 } K _{{21} }^\dagger E_{2 }^{ - 1} H
    _{2 } } \right)^\dagger \nonumber \\
&&+ Y \left( {H _{1 }^\dagger H _{1 }-  H
    _{2 }^\dagger E_{2 }^{ - 1} H _{2 }} \right)Y \nonumber \\ && + G _{1 } \left( {I - K _{{21}}^\dagger  E_{2 }^{ - 1} K _{{21} } } \right)G _{1 }^\dagger  = 0.
\end{eqnarray}
The solutions to these Riccati equations will be required to satisfy
the following conditions.
\begin{enumerate}
  \item[i)] The matrix $F  - G _{2 } E_{1 }^{ - 1} K _{{12} }^\dagger H _{1 }+ \left( {G _{1 } G _{1 }^\dagger  - G _{2 } E_{1 }^{ - 1} G _{2 }^\dagger } \right)X $ is Hurwitz; i.e., $X$ is a stabilizing solution to (\ref{riccati1}).
  \item[ii)] The matrix $F  - G _{1 } K _{{21} }^\dagger E_{2 }^{ - 1} H _{2 }+Y \left( { H _{1 }^\dagger H _{1 }-  H _{2 }^\dagger E_{2 }^{ - 1} H _{2 }} \right)$ is Hurwitz; i.e, $Y$ is a stabilizing solution to (\ref{riccati2}).
  \item[iii)]
The matrices $X$ and $Y$ satisfy 
\begin{equation}
\label{spec_radius}
\rho(XY) < 1
\end{equation}
where $\rho(\cdot)$ denotes the spectral radius. 
\end{enumerate}

If the above Riccati equations have suitable solutions, a quantum
controller of the form (\ref{controller2}) is constructed as follows:
\begin{eqnarray}
\label{central}\nonumber
 F_{c }  &=& F  + G _{2 } H _{c }  - G _{c } H _{2 }  + \left( {G _{1 }  - G _{c } K _{{21} } } \right)G _{1 }^\dagger X; \\\nonumber
 G_{c }  & =& \left( {I - Y X } \right)^{ - 1} \left( {Y H _{2 }^\dagger  + G _{1 } K _{{21} }^\dagger } \right)E_{2 }^{ - 1};\\
H _{c }  &=&  - E_{1 }^{ - 1} \left( {g^2 G _{2 }^\dagger X  + K _{{12} }^\dagger H _{1 } } \right).
\end{eqnarray}

The following Theorem is presented in \cite{MaP1a,MaP3}.

\begin{theorem}
\label{T8} 
\emph{Necessity:} Consider a quantum plant (\ref{plant2}) satisfying
the above assumptions. If there exists a
quantum  controller of the form (\ref{controller2}) such that the resulting closed-loop
  system satisfies the conditions (\ref{cl_hurwitz}), (\ref{cl_hinf}), then
  the Riccati equations (\ref{riccati1}) and (\ref{riccati2}) will have stabilizing solutions
  $X \geq 0$ and $Y \geq 0$ satisfying (\ref{spec_radius}). \\ 
\emph{Sufficiency:} Suppose the Riccati equations (\ref{riccati1}) and (\ref{riccati2}) have stabilizing solutions $X\geq 0$ and $Y \geq 0$ satisfying (\ref{spec_radius}). If the controller (\ref{controller2}) is such that the matrices
$F_{c}$, $G_{c}$, $H_{c}$ are as defined in (\ref{central}), then the
resulting closed-loop system  will satisfy the conditions
(\ref{cl_hurwitz}), (\ref{cl_hinf}).
\end{theorem}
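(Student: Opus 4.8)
The plan is to recognise that the design objectives (\ref{cl_hurwitz})--(\ref{cl_hinf}) constitute a standard (complex) two-block output-feedback $H^\infty$ problem, and that the quantum-mechanical side constraint---physical realizability of the controller---is a separable matter resolved by Theorem \ref{T7}. First I would read the closed-loop data off (\ref{closed1}) and observe that the transfer function $\Gamma_{cl}(s)$ from $dw$ to $dz$ depends only on the plant matrices $F,G_1,G_2,H_1,H_2,K_{12},K_{21}$ and on the signal part $(F_c,G_c,H_c)$ of the controller; the auxiliary noise gains $G_{c_0},G_{c_1},H_{c0},H_{c1},K_{c0},K_{c1}$ enter neither $G_{cl}$ nor $H_{cl}$. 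Thus conditions (\ref{cl_hurwitz}), (\ref{cl_hinf}) coincide with the classical $H^\infty$ objectives for the plant (\ref{plant2}), and assumptions (i)--(iv) are exactly the rank conditions making the two-Riccati solution well posed.

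For necessity I would appeal to the complex analogue of the standard two-Riccati $H^\infty$ theory (cf.\ \cite{DGKF89,ZDG96}). Because every physically realizable quantum controller of the form (\ref{controller2}) is in particular a stabilizing controller meeting $\|\Gamma_{cl}\|_\infty<1$, the classical necessity result applies \emph{a fortiori}: the existence of such a controller forces (\ref{riccati1}) and (\ref{riccati2}) to admit stabilizing solutions satisfying (\ref{spec_radius}). Nonnegativity $X\geq 0$, $Y\geq 0$ is then automatic, since $I-K_{{12}}E_1^{-1}K_{{12}}^\dagger$ and $I-K_{{21}}^\dagger E_2^{-1}K_{{21}}$ are orthogonal projections, so the constant terms of the two Riccati equations are positive semidefinite and the usual comparison argument for stabilizing solutions fixes the sign.

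For sufficiency I would identify (\ref{central}) as the central controller generated by the Riccati solutions. Standard theory then shows directly that, with $F_c,G_c,H_c$ as in (\ref{central}), the closed-loop matrix $F_{cl}$ of (\ref{cl_hurwitz}) is Hurwitz and the norm bound (\ref{cl_hinf}) holds; the cleanest verification exhibits a block Lyapunov certificate assembled from $X$, $Y$ and $(I-YX)^{-1}$ and checks the closed-loop strict bounded-real inequality. This settles the performance claim for the controller's signal part.

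The genuinely quantum step---and the one I expect to be the main obstacle---is to confirm that this signal-part controller can be completed to a physically realizable system of the form (\ref{controller2}). Here I would invoke Theorem \ref{T7}: having already established that $F_c$ is Hurwitz, it remains only to verify the bounded-real bound $\|H_c(sI-F_c)^{-1}G_c\|_\infty\leq 1$ on the controller's own input--output map. This contractiveness is not visible from (\ref{central}) and must be extracted from the Riccati data, using the stabilizing properties of $X$ and $Y$ together with $\rho(XY)<1$ to build the requisite storage function. Once it is in hand, Theorem \ref{T7} furnishes auxiliary matrices $G_{c_0},H_{c0}$ and the remaining entries that render (\ref{controller2}) physically realizable, completing the proof.
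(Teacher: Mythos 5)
Your first three paragraphs follow essentially the route by which this result is actually established: the paper itself gives no proof (it cites \cite{MaP1a,MaP3}), and those works carry out exactly the program you describe --- the closed-loop objectives (\ref{cl_hurwitz}), (\ref{cl_hinf}) involve only the plant data and the signal part $(F_c,G_c,H_c)$ of the controller, and the complex-matrix analogue of the two-Riccati theory of \cite{DGKF89,ZDG96} yields both necessity and the sufficiency of the central controller (\ref{central}). Up to that point your proposal is sound.

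The genuine problem is your final paragraph. The theorem claims nothing about physical realizability: the necessity hypothesis is only that some controller of the form (\ref{controller2}) achieves (\ref{cl_hurwitz}), (\ref{cl_hinf}), and the sufficiency conclusion is only that the central controller achieves (\ref{cl_hurwitz}), (\ref{cl_hinf}). By adding realizability as a step of the proof you are proving a different, stronger statement --- and the step you call ``the main obstacle'' would in fact fail. The bound $\|H_c(sI-F_c)^{-1}G_c\|_\infty \leq 1$ cannot be ``extracted from the Riccati data'': it does not follow from $X\geq 0$, $Y\geq 0$ being stabilizing or from $\rho(XY)<1$. The paper says this explicitly immediately after the theorem: ``this theorem does not guarantee that a controller defined by (\ref{controller2}), (\ref{central}) will be physically realizable''; realizability via Theorem \ref{T7} is available only under the \emph{additional} hypothesis $\|H_c(sI-F_c)^{-1}G_c\| < 1$, which the central controller may well violate. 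If your claimed extraction were possible, that caveat and the extra hypothesis would be vacuous. So you should delete the last paragraph --- the first three already prove what is actually asserted --- or recognize that guaranteeing a physically realizable controller requires a genuinely different device, such as the noise-augmentation result of \cite{JNP1} discussed in Remark \ref{R3}, which applies to a different class of systems and is not part of Theorem \ref{T8}.
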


Note that this theorem does not guarantee that a controller defined
by (\ref{controller2}), (\ref{central}) will be physically realizable. However, if the
matrices defined in (\ref{central}) are such that 
\[
\|H_c\left(sI-F_c\right)^{-1}G_c\| < 1,
\]
then it follows from Theorem \ref{T7} that a corresponding physically
realizable controller of the form (\ref{controller2}) can be
constructed. 
\section{Conclusions}
\label{sec:conclusions}
In this paper, we have surveyed some recent results in the area of
quantum linear systems theory and the related area of coherent quantum
$H^\infty$ control. However, a number of other recent results on
aspects of quantum linear systems theory have not been covered in
this paper. These include results on coherent quantum LQG control (see
\cite{JNP1,HP4a}), and model reduction for quantum linear systems (see
\cite{PET09Aa}). Furthermore,  in order to apply 
synthesis results on coherent quantum feedback controller synthesis, it is
necessary to realize a synthesized feedback controller transfer
function using physical
optical components such as optical cavities, beam-splitters, optical
amplifiers, and phase shifters. In a recent paper \cite{NJD09}, this
issue was addressed for a general class of coherent linear quantum
controllers. An alternative approach to this problem is addressed in
\cite{PET08Aa} for the class of annihilation operator linear
quantum systems considered in Subsection \ref{subsec:annihilator} and
\cite{MaP1a,MaP3,MaP4}. For this class of quantum systems, an algorithm is given
to realize a physically realizable controller transfer function in
terms of a cascade connection of optical cavities and phase
shifters. 

An important application of   both classical
and coherent  feedback control of quantum systems is 
in enhancing the property of entanglement for linear quantum
systems. Entanglement is an intrinsically quantum mechanical notion
which has many applications in the area of quantum computing and
quantum communications.

To conclude, we have surveyed some of the important advances
in the area of linear quantum control theory. However,  many important
problems  in this area remain
open and the area provides a great scope for future research.

% \bibliography{/home/irp/Bibliog/irpnew}  
% \bibliographystyle{IEEEtran}

\end{document}